\newlength{\oldparindent}
\newtheorem{theorem}{Theorem}[section]
\newtheorem{corollary}[theorem]{Corollary}
\newtheorem{proposition}[theorem]{Proposition}
\newtheorem{example}[theorem]{Example}
\newtheorem{definition}[theorem]{Definition}
\newtheorem{remark}[theorem]{Remark}
\newtheorem{problem}[theorem]{Problem}
\newtheorem{assumption}[theorem]{Assumption}
\title{The Entropic Measure Transform} %
\title{\Large \bf The Entropic Measure Transform} %
\author{
  \textsc{Renjie WANG}\footnotemark[1]
\quad
  \textsc{Cody HYNDMAN}\thanks{Department of Mathematics and Statistics, Concordia University, 1455 Boulevard de Maisonneuve Ouest, Montr\'{e}al, Qu\'{e}bec, Canada H3G 1M8. Email: cody.hyndman@concordia.ca}
\quad
\textsc{Anastasis KRATSIOS}\thanks{Department of Mathematics, ETH Z\"{u}rich, Z\"{u}rich, Switzerland. ORCID ID: 0000-0001-6791-3371. Email: anastasis.kratsios@math.ethz.ch}
}
\date{%
February 20, 2019 \\ %
}
\begin{document}
\newpage

\maketitle
\numberwithin{equation}{section}

\begin{abstract}
We introduce the entropic measure transform (EMT) problem for a general process and prove the existence of a unique optimal measure characterizing the solution.  The density process of the optimal measure is characterized using a semimartingale BSDE under general conditions.   The EMT is used to reinterpret the conditional entropic risk-measure and to obtain a convenient formula for the conditional expectation of a process which admits an affine representation under a related measure.  The entropic measure transform  is then used provide a new characterization of defaultable bond prices, forward prices, and futures prices when the asset is driven by a jump diffusion.  The characterization of these pricing problems in terms of the EMT provides economic interpretations as a maximization of returns subject to a penalty for removing financial risk as expressed through the aggregate relative entropy. The EMT is shown to extend the optimal stochastic control characterization of default-free bond prices of Gombani and Runggaldier (Math. Financ. 23(4):659-686, 2013).  These methods are illustrated numerically with an example in the defaultable bond setting.  
\end{abstract}

\noindent
{\itshape Keywords:} relative entropy; free energy; defaultable bond price; futures price; forward price; affine term-structure; quadratic term-structure; forward-backward stochastic differential equations; optimal stochastic control.  

\noindent
\let\thefootnote\relax\footnotetext{This research was supported by the Natural Sciences and Engineering Research Council of Canada (NSERC).  The authors would like to thank Professor W.\ Runggaldier (Padova) for helpful comments on an earlier version of this paper.}

\noindent
{\bf Mathematics Subject Classification (2010):} 91G30, 91G40, 91G80, 60H20, 60H30, 93E20

\newpage

\section{Introduction}
 The  pricing problem for  zero-coupon bonds based on an underlying short term interest rate process $r(t)\in \mathbb{R}^{+}$ is a fundamental and important topic in financial mathematics. Various models for $r(t)$ have been proposed under the risk neutral measure. One-factor models use the instantaneous spot rate $r(t)$ as the basic state variable, such as \citet{vasicek1977equilibrium} and  \citet{cox1985theory}. Multi-factor models in which the short rate depends on a multidimensional factor process include the models of \citet{longstaff1992interest},   \citet{hull1994numerical}, and \citet{duffie1996yield}. There are several ways to characterize the bond price.  In an arbitrage free market the bond price can be viewed as a solution to a partial differential equation called the term-structure equation (see \citet[Proposition 21.2]{bjork2004arbitrage}) or, linked through the Feynman-Kac formula, by using risk neutral valuation (see \citet[Proposition 21.3]{bjork2004arbitrage}).
 Recently alternative  approaches have been studied including the stochastic flow approach (see \citet{Elliott},   \citet{Zhou2010}, and \citet{hyndman2009forward}), a forward-backward stochastic differential equation approach (see \citet{hyndman2007forward,hyndman2009forward} and \citet{Zhou2010}), and an optimal stochastic control approach of  \citet{gombani2012arbitrage}.
 
\citet{gombani2012arbitrage} associate the pricing problem of default-free bonds with an optimal stochastic control (OSC) problem by transforming the term-structure equation to an equivalent Hamilton-Jacobi-Bellman equation. Inspired by \citet{gombani2012arbitrage} and the notion of  relative entropy we develop an entropic measure transform (EMT) problem whose value function is connected with the price of bonds. We explore the equivalence between the EMT problem and OSC problem.  One advantage of the EMT problem compared to the OSC problem is the straightforward extension to models with jumps or even to models for defaultable bonds.  The EMT problem also provides a financial interpretation of the pricing problem in terms of maximization of returns subject to an entropy penalty term that quantifies financial risk.

We show that the optimal measure and the value process of the EMT problem can be completely characterized by a forward-backward stochastic differential equation (FBSDE).  In addition, the entropic measure transform has an explicit expression provided that the related FBSDE admits an explicit solution.  From the explicit representation of the entropic measure transform we note that the  measure which solves the EMT problem coincides with the martingale measure using the bond price as num\'{e}raire, or the forward measure. These connections provide some insight into why the forward measure transformation employed in the FBSDE approach of \citet{hyndman2009forward} is effective. Under the framework of affine term-structure models (ATSMs) and QTSMs, \citet{hyndman2009forward} and \citet{Zhou2010} presented explicit solutions for the related FBSDE.

The remainder of the paper is organized as follows. In Section \ref{s_gen_EMT}, the entropic measure transform (EMT) problem is introduced in full generality, solved, and characterized in terms of a backwards semimartingale.  
In Section~\ref{s_gen_connections}, connections between the EMT, the conditional entropic risk-measure, and affine processes are made.  Using the EMT, it is shown that the entropic measure transform can be used to simply compute the conditional expectation of a stochastic process which can be represented as an affine process under a specific equivalent measure.  
In Section \ref{s_EMT_if}, the entropic measure transform is applied to pricing problems, beginning with default-free zero-coupon bonds, then extended to defaultable zero-coupon bonds, and lastly it is applied to futures and forward prices.  In all these cases, the backwards semimartingale characterizing the optimal measure is reduced to an FBSDE whose solution is given in terms of a Riccati-equation. We also establish an equivalence relation between the OSC problem and the EMT problems for bond pricing. Section~\ref{s_Numerical} contains a numerical illustration of the implementing the method in the case of defaultable bonds.    
Section \ref{conclusion} concludes and an appendix discusses the solvability of certain Riccati equations.

\section{The Entropic Measure Transform}\label{s_gen_EMT}
In this section, we introduce the entropic measure transform of a probability measure, and describe how it may be computed.  Let $\mathcal{P}(\Omega)$ be the set of probability measures absolutely continuous with respect to
$\mathbb{P}$  on  $(\Omega, \mathcal{F})$.  The following definitions generalize the classic definitions of 
the free energy and the relative entropy given in \citet{dai1996connections} to the aggregate or dynamic 
version that incorporates the presence of a filtration $\mathcal{F}_t$.  

\begin{definition}\label{def:free:energy:a}
For $\mathbb{P}\in \mathcal{P}(\Omega)$ and $\varphi$  an $\mathcal{F}_T$-measurable random variable the 
aggregate free energy of $\varphi$ with respect to $\mathbb{P}$, 	$\varepsilon_{t,T}(\varphi)$,  is defined 
by
\begin{equation}\label{eq:free:energy:a}
	\varepsilon_{t,T}(\varphi)=\ln(\mathbb{E}_{\mathbb{P}}[e^{\varphi}|\mathcal{F}_t]),\quad t\in[0,T].  
\end{equation} 
\end{definition}

\begin{definition}\label{def:reletive:entropy:a}
	Consider, in addition to $\mathbb{P}$, another $\mathbb{Q} \in \mathcal{P}(\Omega)$. Suppose the
	Radon-Nikodym derivative of $\mathbb{Q}$ with respect to $\mathbb{P}$ is 
	\begin{equation}\label{eq:reletive:entropy:a}
		\left.\frac{d \mathbb{Q}}{d \mathbb{P}}\right|_{\mathcal{F}_s}=\Gamma_s,\quad  0\leq s \leq T.
	\end{equation} 
	Then, for $t\in [0,T]$,	the aggregate relative entropy of $\mathbb{Q}$ with respect to $\mathbb{P}$ is 
	defined as
	\begin{equation}\label{eq:reletive:entropy:b}
		H_{t,T}(\mathbb{Q}|\mathbb{P}) =
		\begin{cases}
			\mathbb{E}_\mathbb{Q}\left[\ln\left(\frac{\Gamma_T}{ \Gamma_t}\right)|\mathcal{F}_t\right] & \text{if}\quad \ln\left( \frac{\Gamma_T}{\Gamma_s}\right)\in L^1(\mathbb{P}), \\
			+\infty       & \text{otherwise}.
		\end{cases}
	\end{equation}
\end{definition}

The duality relation between the aggregate free energy and aggregate relative entropy relies on the following set of measures.  
For $t\in [0,T]$, and for every $\mathbb{P}$-a.s.\ positive and uniformly integrable $(\mathbb{P},\mathcal{F}_t)$-martingale $\Lambda_t$, satisfying
\begin{enumerate}[(i)]
	\item
	$
	\mathbb{E}\left[
	\lim\limits_{t\mapsto \infty}\Lambda_t
	\right]=1,
	$
	\item The measure $\tilde{\mathbb{P}}$ defined by $\frac{d\tilde{\mathbb{P}}}{d\mathbb{P}}=\lim\limits_{t\mapsto \infty}\Lambda_t$ is equivalent to $\mathbb{P}$.  
\end{enumerate}
We may define a family of probability measures $\mathcal{P}_t(\Lambda) \subseteq \mathcal{P}(\Omega)$ which are indistinguishable from $\mathbb{P}$ up to time $t$ by
\begin{equation}\label{eq:measure:subset:a}
	\mathcal{P}_t(\Lambda)=\left\{
	\mathbb{Q}\in \mathcal{P}(\Omega)~\Big|~\mathbb{Q}\sim \mathbb{P},%
	\frac{d\mathbb{P}}{d\mathbb{Q}}\Big|_{\mathcal{F}_t}=\left(\frac{d\mathbb{Q}}{d\mathbb{P}}
	\Big|_{\mathcal{F}_t}\right)^{-1}
	(\forall t \in [0,\infty))
	\text{ and }%
	\frac{d\mathbb{Q}}{d\mathbb{P}}\Big|_{\mathcal{F}_s}=\Lambda_s
	(\forall s \in [0,t])
	\right\}.
\end{equation}

\begin{definition}[$\phi$-Compatible]\label{ass_nice_behave}
	Given an $\mathcal{F}_T$-measurable random-variable $\phi$, a $\mathcal{F}_t$-predictable process $\theta_t$ is said to be $\phi$-compatible if satisfies:
	\begin{enumerate}
\item $\theta_t$ is $\mathbb{P}$-a.s.\ positive,
\item $\mathbb{E}_{\mathbb{P}}\left[e^{\theta_t \phi}\middle| \mathcal{F}_t\right]$ is a uniformly-integrable $(\mathbb{P},\mathcal{F}_t)$-martingale, for every $t \in [0,T]$.  
	\end{enumerate}
\end{definition}

Similar to \citet{dai1996connections}, the following proposition reveals the duality relationship between the aggregate free energy and the aggregate relative entropy.  

\begin{proposition}\label{proposition:duality:relation:a}
	For $t\in [0,T]$ and  any $\mathcal{F}_T$-measurable random variable $\phi$ and any non-negative $\mathcal{F}_t$-predictable processes $\theta_t,\alpha_t$, if $\theta_t$ is $\phi$-compatible, then the following holds
	\begin{equation}\label{eq:duality:relation:a}
		-\varepsilon_{t,T}(\theta_t\phi -\alpha_t + \ln\left(\Lambda_t\right))
		=
		\theta_t
		\inf_{\mathbb{Q}\in
			\mathcal{P}_t(\Lambda)}\left\{\mathbb{E}_{\mathbb{Q}}\left[-\phi+\frac{\alpha_t}{\theta_t}\middle |\mathcal{F}_t\right]+\frac1{\theta_t}
		H_{t,T}(\mathbb{Q}|\mathbb{P})\right\}
		. 
	\end{equation}
	The unique infimum is attained at $\mathbb{P}^{\star}$ determined by the Radon-Nikodym derivative  
	\begin{equation}\label{eq:optimal:measure:a}
		\left.	\frac{d \mathbb{P}^\star}{d \mathbb{P}}\right|_{\mathcal{F}_T}=
		\frac{e^{\theta_t\phi}
	}{\mathbb{E}_{\mathbb{P}}[e^{\theta_t\phi}|\mathcal{F}_t]}.  
	\end{equation}
\end{proposition}
\begin{proof}
	First assume that $\theta_t=1$ $\mathbb{P}$-a.s.  As in equation (\ref{eq:reletive:entropy:a}) we suppose 
	\begin{equation*}
		\left.\frac{d \mathbb{Q}}{d \mathbb{P}}\right|_{\mathcal{F}_s}=\Gamma_s,\quad  0\leq s \leq T.  
	\end{equation*}
	Since $\phi \cdot \frac{\Gamma_t}{\Gamma_T}$ is $\mathcal{F}_T$-measurable then~\citep[Theorem 3.8]{JacodShiryaevLimitTheorems} and the assumption that the following generalized Bayes' formula holds
	$
	\frac{d\mathbb{P}}{d\mathbb{Q}}\Big|_{\mathcal{F}_t}=\left(\frac{d\mathbb{Q}}{d\mathbb{P}}
	\Big|_{\mathcal{F}_t}\right)^{-1}
	$ for every $t \geq 0$, 
	imply that
	\begin{equation}
		\mathbb{E}_{\mathbb{Q}}\left[\phi \frac{\Gamma_t}{\Gamma_T}\middle|\mathcal{F}_t\right]
		=
		\mathbb{E}_{\mathbb{P}}\left[\phi \frac{\Gamma_t}{\Gamma_T}\frac{\Gamma_T}{\Gamma_t}\middle|\mathcal{F}_t\right]
		=\mathbb{E}_{\mathbb{P}}\left[\phi\middle|\mathcal{F}_t
		\right].
		\label{eq_reverse_abs_bayes_a}
	\end{equation}	
	Hence for any $\mathcal{F}_T$-measurable random-variable $\phi$ and any $\mathbb{Q}\in\mathcal{P}_t(\Lambda)$ the following reverse-abstract Bayes' formula holds
	\begin{equation}
		\mathbb{E}_{\mathbb{P}}\left[\phi\middle|\mathcal{F}_t\right]=\mathbb{E}_{\mathbb{Q}}\left[\phi\frac{\Gamma_t}{\Gamma_T}\middle|\mathscr{F}_t\right]
		.
		\label{eq_reverse_abs_bayes_b}
	\end{equation}	

	Since $-\ln (\cdot)$ is a convex function, Jensen's inequality implies that
	\begin{align}
		\nonumber
		-\varepsilon_{t,T}(\phi)=-\ln(\mathbb{E}_{\mathbb{Q}}\Big[\frac{e^{\phi}\Gamma_t}{\Gamma_T}\Big|\mathcal{F}_t\Big]) 
		\leq& 
		\mathbb{E}_{\mathbb{Q}}[-
		\phi|\mathcal{F}_t]+\mathbb{E}_{\mathbb{Q}}\Big[\ln\left(\frac{\Gamma_T}{\Gamma_t}\right)\Big|\mathcal{F}_t\Big]\\
		=&
		\mathbb{E}_{\mathbb{Q}}[-\phi|\mathcal{F}_t]+H_{t,T}(\mathbb{Q}|\mathbb{P})
		. 	
		\label{eq:duality:relation:proof:c} 
	\end{align}
	The ansatz,
	\begin{equation}
		\frac{\Gamma_T}{\Gamma_t}=\frac{e^{\phi}}{\mathbb{E}_{\mathbb{P}}[e^{\phi}|\mathcal{F}_t]}.  
		\label{eq_ansatz}
	\end{equation}
	can be verified to respect equation (\ref{eq:duality:relation:proof:c}) with equality. 
	Since $\Gamma_s=\Lambda_s$ for every $s\leq t$ then equation~\eqref{eq_ansatz} implies that
	\begin{equation*}
		\Gamma_T=\frac{e^{\phi}\Lambda_t}{\mathbb{E}_{\mathbb{P}}[e^{\phi}|\mathcal{F}_t]}.  
	\end{equation*}
	This establishes the case where $\theta_t=1$.  
	For the general case, let $\varphi$ be a $\mathscr{F}_T$-measurable random variable.  Then $\phi\triangleq \theta_t\varphi$ is $\mathscr{F}_T$-measurable.  Therefore, the first part implies that
	\begin{align}
		-\varepsilon_{t,T}(\phi)=	-\varepsilon_{t,T}(\theta_t\varphi -\alpha_t)
		=&
		\inf_{\mathbb{Q}\in
			\mathcal{P}_t(\Lambda)}\{\mathbb{E}_{\mathbb{Q}}[-\theta_t\varphi|\mathcal{F}_t]+
		H_{t,T}(\mathbb{Q}|\mathbb{P})\}\\
		=&
		\inf_{\mathbb{Q}\in
			\mathcal{P}_t(\Lambda)}
		\theta_t
		\{\mathbb{E}_{\mathbb{Q}}[-\varphi + \frac{\alpha_t}{\theta_t}|\mathcal{F}_t]+
		\frac1{\theta_t}H_{t,T}(\mathbb{Q}|\mathbb{P})\}
		,
	\end{align}
	where the last line follows from the fact that $\theta_t$ and $\alpha_t$ are $\mathcal{F}_t$-predictable and do not enter into the optimization.  Moreover, division by $\theta_t$ is well-defined since it $\mathbb{P}$-a.s.\ takes vales in $(0,\infty)$.  Lastly, leaving $\Lambda_t$ inside the expectation and noting that $\ln(\Lambda_t)$ is $\mathbb{P}$-a.s.\ well-defined since $\Lambda_t>0$,  $\mathbb{P}-a.s.$\ yields the existence as well as equation~\eqref{eq:optimal:measure:a}.  
	The process defined in equation~\eqref{eq:optimal:measure:a} is indeed a well-defined density due to the assumption on $\theta_t$ stated in Definition~\ref{ass_nice_behave}(i-ii) and as $s$ approaches $T$ from the left, then 
	$\frac{
		\mathbb{E}_{\mathbb{P}}\left[e^{\theta_s\varphi}\middle|\mathcal{F}_s\right]
	}{\mathbb{E}_{\mathbb{P}}[e^{\theta_t\varphi}|\mathcal{F}_t]}$ converges to the right-hand side of equation~\eqref{eq:optimal:measure:a}.  The uniqueness is guaranteed by the convexity of the loss function $\theta_t
	\{\mathbb{E}_{\mathbb{Q}}[-\varphi + \frac{\alpha_t}{\theta_t}|\mathcal{F}_t]+
	\frac1{\theta_t}H_{t,T}(\mathbb{Q}|\mathbb{P})\}$, with respect to $\mathbb{Q}$.  
\end{proof}

\begin{definition}[Entropic Measure Transform]\label{defn_EMT}
	The function from $\mathcal{P}(\Omega)$ to itself, mapping any probability measure $\mathbb{P}$ to the measure $\mathbb{P}^{\star}$ defined as the (unique) minimizer of equation~\eqref{eq:duality:relation:a}, is called the entropic measure transform.  
\end{definition}

The density process $\frac{d\mathbb{P}^{\star}}{d\mathbb{P}}|_{\mathcal{F}_t}$ admits a convenient description in terms of a decoupled forward-backwards semimartingale.  We review a few definitions and fix some notation in order to state the next and central theorem of the paper.  Let $\mathcal{M}^{\mathbb{P}}$ denote the set of square-integrable $(\mathcal{F}_t,\mathbb{P})$-martingales with initial state equal to $0$.  For any subset $A$ of $\mathcal{M}^{\mathbb{P}}$, the stable subspace, denoted by $\mathcal{S}(A)$ is the smallest linear subspace of $\mathcal{M}$ containing $A$ and satisfying
$$
\left(\forall \tilde{M}\in \mathcal{S}(A)\right)
\quad
\tilde{M} \in \mathcal{S}(A) \Rightarrow \int \phi d\tilde{M}
.
$$ Moreover, we will denote by $\mathbb{M}(A)$, the set of probability measures on $\left(\Omega,\mathcal{F}_T\right)$ for which each $\tilde{M}\in \mathcal{S}(A)$ is a square-integrable martingale.  

\begin{theorem}\label{thrm_main}
	Let $\mathbb{P}$ be a probability measure in $\mathcal{P}_t(1)$, $M_1,\dots,M_n$ be an orthogonal set of $(\mathbb{P},\mathcal{F}_t)$-martingales, and $X_T$ be a $(\mathbb{P},\mathcal{F}_T)$-square integrable random-variable.  
	If $\,\mathbb{P}$ is an extremal point of $\mathcal{M}\left(M_1,\dots,M_n\right)$, if $\mathcal{S}\left(M_1,\dots,M_n\right)$ contains all the constants, and if for every $s \in [t,T]$, there exists a solution to the following semimartingale backwards stochastic differential equation
	\begin{align}
	\nonumber
	-\ln\left(
	\mathbb{E}_{\mathbb{P}}\left[
	e^{X_T}
	\middle| \mathcal{F}_s
	\right]
	\right)  = & {X_T}
	-
	\sum_{i=1}^{n}\int^T_s Z_{u_-}^idM^i_u 
	+
	\sum_{i=1}^{n}\frac1{2}
	\int^T_s(Z_s^i)^2
	d[M^i]_s
	\\
	-&
	\sum_{u\geq s}^{T}
	\left(
	\Delta \mathbb{E}_{\mathbb{P}}\left[e^{X_T}
	\middle| \mathcal{F}_u
	\right]
	-
	\sum_{i=1}^{n}Z_{u-}^i\Delta M^i_u
	\right)
	,
	\label{thrm_main_defining_system}
	\end{align}
	then the density process measure $\frac{d\mathbb{P}^{\star}}{d\mathbb{P}}\Big|_{\mathcal{F}_s}$, is characterized by
	\begin{align}\nonumber
	\frac{d\mathbb{P}^{\star}}{d\mathbb{P}}\Big|_{\mathcal{F}_s}= & \exp\left(
	-
	\sum_{i=1}^{n}\int^T_s Z_{u_-}^idM^i_u 
	+
	\sum_{i=1}^{n}\frac1{2}
	\int^T_s(Z_s^i)^2
	d[M^i]_s\right)\times\\
	&\exp\left(-
	\sum_{u\geq s}^{T}
	\left(
		\Delta \mathbb{E}_{\mathbb{P}}\left[e^{X_T}
		\middle| \mathcal{F}_u
		\right]
	-
	\sum_{i=1}^{n}Z_{u-}^i\Delta M^i_u
	\right)
	\right)
\label{thrm_main_defining_equation}
.
	\end{align}
\end{theorem}
\begin{proof}
For $s \in [t,T]$, let $-Y_s\triangleq \ln\left(
\mathbb{E}_{\mathbb{P}}\left[
e^{X_T}
\middle| \mathcal{F}_s
\right]
\right)$, then 
\begin{equation}
e^{-Y_s} = \mathbb{E}_{\mathbb{P}}\left[
e^{X_T}
\middle| \mathcal{F}_s
\right]
.
\label{eq_thrm_1}
\end{equation}
Define $\eta_s \triangleq \mathbb{E}_{\mathbb{P}}\left[
e^{X_T}
\middle| \mathcal{F}_s
\right]$; notice that $\eta_s$ is the Doob $(\mathbb{P},\mathcal{F}_s)$-martingale.  Since $\mathbb{P}$ is extremal on $\mathbb{M}\left(M_1,\dots,M_n\right)$ and $\mathcal{S}\left(M_1,\dots,M_n\right)$ contains all the constants, then the central theorem of \cite{JacodYorThrm} implies that $\mathcal{M}^{\mathbb{P}}=\mathcal{S}\left(M_1,\dots,M_n\right)$.  In particular, $\left\{M_i\right\}_{i=1}^n$ have the predictable representation property.  Hence, there exists $\mathbb{P}$-a.s.\ unique square-integrable processes $\phi^i_s$ with each $\phi^i_s$ being $M_s^i$-predictable, such that 
\begin{equation}
\eta_s 
	= 
\eta_0 + \sum_{i=1}^n \int_0^s
	\phi_u^i
dM_u^i
.
\label{eq_thrm_2}
\end{equation}
Since $e^{-X_T}$ is $\mathbb{P}$-a.s.\ positive, then $\eta_s$ is $\mathbb{P}$-a.s.\ strictly positive.  Therefore, for each $i=1,\dots,n$, the process $Z_s^i\triangleq \frac{\phi^i_s}{\eta_s}$ is well-defined.  Therefore, equation~\eqref{eq_thrm_2} may be rewritten as
\begin{equation}
\eta_s 
= 
\eta_0 + \sum_{i=1}^n \int_0^s
\eta_s Z_s^i
dM_u^i
.
\label{eq_thrm_3}
\end{equation}

Since $Y_s = -\ln(\eta_s)$, then the generalized It\^{o} Lemma, \citep[Theorem 14.2.1]{SamElliotStochastics}, equation~\eqref{eq_thrm_3}, and the orthogonality of $M^i$ to $M^j$ for $i\neq j$, imply that
\begin{align}
\nonumber
Y_s 		=& 
Y_0
	+ 
\sum_{i=1}^{n}\int_0^s \frac{\eta_{u_-}Z_{u_-}^i}{\eta_{u_-}}dM^i 
	- 
\sum_{i=1}^{n}\frac1{2}
			\int_0^s
				\frac{\eta_s^2 (Z_s^i)^2 }{\eta_s^2}
			d[M^i]_s
	-\sum_{0\leq u}^{T}
	\left(
		\Delta Y_u
			-
		\sum_{i=1}^{n}\frac{\Delta\eta_{u_-}}{\eta_{u_-}}
	\right)
	\\
			=&
Y_0
+ 
\sum_{i=1}^{n}\int_0^s Z_{u_-}^idM^i_u 
- 
\sum_{i=1}^{n}\frac1{2}
\int_0^s(Z_s^i)^2
			d[M^i]_s
-\sum_{0\leq u}^{T}
\left(
\Delta Y_u
-
\sum_{i=1}^{n}Z_{u-}^i\Delta M^i_u
\right)
.
\label{eq_thrm_4}
\end{align}
Since $\eta_T = \mathbb{E}\left[
e^{X_T}
\middle| \mathcal{F}_T
\right] = e^{X_T}$, then for $s \in [0,T]$, equation~\eqref{eq_thrm_4} gives way to the following semimartingale BSDE
\begin{align}
Y_s=&{X_T} - 
Y_0
-
\sum_{i=1}^{n}\int^T_s Z_{u_-}^idM^i_u 
+
\sum_{i=1}^{n}\frac1{2}
\int^T_s(Z_s^i)^2
d[M^i]_s
\\
-&
	\sum_{u\geq s}^{T}
\left(
\Delta Y_u
-
\sum_{i=1}^{n}Z_{u-}^i\Delta M^i_u
\right)
.
\label{eq_thrm_5}
\end{align}
From the condition that $\mathbb{P}^{\star}\in \mathcal{P}_t(1)$, we notice that equation~\eqref{eq:optimal:measure:a} implies that $\eta_t=1$; thus $Y_t=0$.  Moreover, $Y_s=
-\ln\left(
\mathbb{E}_{\mathbb{P}}\left[
e^{X_T}
\middle| \mathcal{F}_s
\right]
\right)
$, hence equation~\eqref{eq_thrm_5} may be rewritten as 
\begin{align}
\nonumber
-\ln\left(
\mathbb{E}_{\mathbb{P}}\left[
e^{X_T}
\middle| \mathcal{F}_s
\right]
\right) -{X_T} = &
-
\sum_{i=1}^{n}\int^T_s Z_{u_-}^idM^i_u 
+
\sum_{i=1}^{n}\frac1{2}
\int^T_s(Z_s^i)^2
d[M^i]_s
\\
-&
\sum_{u\geq s}^{T}
\left(
\Delta Y_u
-
\sum_{i=1}^{n}Z_{u-}^i\Delta M^i_u
\right)
.
\label{eq_thrm_6}
\end{align} 
Taking the exponential of both sides of equation~\eqref{eq_thrm_6} and noting that $\frac{e^{X_T}}{\mathbb{E}_{\mathbb{P}}\left[
	e^{X_T}
	\middle| \mathcal{F}_s
	\right]}$ is the right-hand side of equation~\eqref{eq:optimal:measure:a}, 
yields equations~\eqref{thrm_main_defining_system} and~\eqref{thrm_main_defining_equation}. 
\end{proof}

The connection between entropic measure transform, aggregate relative entropy, affine processes, and risk measures will be discussed in the following section.  
\section{Connections to Risk Measures and Affine Processes}\label{s_gen_connections}
Affine processes, present a large and tractable class of stochastic processes with many desirable properties.  For example, in \cite{cuchiero2011affine} the moment generating function of affine processes were characterized in terms of generalized Riccati equations and subsequently in \cite{2018arXiv180107796G} highly accurate approximate filtering methodologies were developed.  In the next section, the EMT is used to obtain a convenient closed form expression for the conditional expectation of an affine processes, which admits an affine representation under an auxiliary equivalent measure to $\mathbb{P}$.  Connections are subsequently made to the entropic risk measure.

\subsection{Characterization of Cumulants for Affine Processes}\label{ss_affine}
Suppose that $X_t$ is a time-homogeneous Markov process for which there exists $\mathbb{C}$ and $\mathbb{C}^n$-valued functional $p$ and $q$, respectively, such that the characteristic function of $X_t$ may be written as
\begin{equation}
\mathbb{E}_{\mathbb{P}}\left[
	e^{\langle u,X_T\rangle}
\middle| X_t = x\right]
	=
\exp\left(
p(t,u) + \langle q(t,u),X_t\rangle
\right)
\label{eq_affine_property}
.
\end{equation}

Following \cite{KellerEMAFF}, we make the following assumption.  
\begin{assumption}\label{ass_exun}
For every $u,x \in \mathbb{R}^n$ $\mathbb{E}_{\mathbb{P}}^x\left[e^{\langle u,X_t\rangle }\right]<\infty$.
\end{assumption}
Under Assumption~\ref{ass_exun}, it was seen in \cite{KellerEMAFF} show that the following generalized Riccati equations have a unique (real) minimal solution
\begin{align}
\frac{\partial p}{\partial t}(t,y) &= F(q(t,u)) ;  &p(0,y)=0\\
\frac{\partial q}{\partial t}(t,y) & = R(q(t,u));  &q(0,y)=y
\label{eq_affine_gen_riccatti}
,
\end{align}
for of all $t\in [0,T]$, where the functions $F$ and $R$ are defined in \cite{KellerEMAFF}.  Moreover, the unique minimal solution $(p,q)$ was shown to characterize the moments of $X_t$, through
$$
\mathbb{E}_{\mathbb{P}}\left[
e^{\langle u,X_T\rangle}
\middle| X_t = x\right]
=
\exp\left(
p(t,u) + \langle q(t,u),X_t\rangle
\right)
.
$$

\begin{proposition}\label{thrm_affine_characterization}
	Let $\mathbb{Q}$ be a measure equivalent to $\mathbb{P}$, whose EMT is $\mathbb{P}$, that is
	\begin{equation}
	\mathbb{Q}^{\star}=\mathbb{P}
\label{eq_thrm_affine_characterization}
,
	\end{equation}
under which $X_t$ is an $\mathbb{R}$-valued affine process, satisfies Assumption~\ref{ass_exun}.  Then the expectation of $X_t$ can be characterized by
	\begin{equation}
	\mathbb{E}_{\mathbb{P}}\left[%
	X_T%
	\middle |\mathcal{F}_t\right]
		=
	p(T-t,1) + \langle q(T-t,1) , X_t\rangle + 	H_{t,T}(\mathbb{P}|\mathbb{Q})
	\label{eq_thrm_affine_characterization_result}
	. 
	\end{equation}
\end{proposition}
\begin{proof}
Let $(p,q)$ be the minimal solution to the Riccati system of equation \eqref{eq_affine_gen_riccatti}, under $\mathbb{Q}$.  Hence, \citep[Corollary 2.16]{KellerEMAFF} implies that for every $x \in \mathbb{R}^{n}$, $\theta \in \mathbb{R}$, $u \in S^{n}$, and $t \in [0,T]$ it follows that
\begin{equation}
\mathbb{E}_{\mathbb{P}}\left[
e^{\theta \langle u,X_T\rangle } \middle| \mathcal{F}_t
\right]
	=
\exp\left(
	p(T-t,\theta\cdot u) + \langle q(T-t,\theta \cdot u) , X_t\rangle 
\right)
\label{eq_thrm_affine_characterization_1}
.
\end{equation}
For $\theta \neq 0$, combining equations~\eqref{eq_thrm_affine_characterization_1} and~\eqref{eq:duality:relation:a} yields
\begin{align}
\nonumber
p(T-t,u) + \langle q(T-t,u) , X_t\rangle  = &
\frac{p(T-t,\theta\cdot u) + \langle q(T-t,\theta \cdot u) , X_t\rangle }{\theta} \\
	 = &
\mathbb{E}_{\mathbb{P}^{\star}}\left[\langle u,X_T\rangle\middle |\mathcal{F}_t\right]-\frac1{\theta}
	H_{t,T}(\mathbb{P}^{\star}|\mathbb{P})
\label{eq_thrm_affine_characterization_2}
.
\end{align}
Setting $u=1=\theta$, combining equations~\eqref{eq_thrm_affine_characterization_2}, equation~\ref{eq_thrm_affine_characterization}, and equation~\eqref{eq:duality:relation:a} yields equation~\eqref{eq_thrm_affine_characterization_result}.  
\end{proof}
\subsection{Relation To Conditional Entropic Risk}\label{ss_ce_risk}

The conditional entropic risk-measure with risk-aversion level $\theta>0$, denoted by $\rho_{t,\theta}^{ent}$, is a well-studied dynamic convex risk-measure.  Defined on suitable random variable (see \cite{DCRMs} for details) by
$$
\rho_{t,\theta}^{ent}(X)\triangleq \frac1{\theta}\ln\left(\mathbb{E}_{\mathbb{P}}\left[e^{-\theta X}\mid \mathcal{F}_t\right]\right).
$$
Theorem~\ref{thrm_main_defining_equation}, Proposition~\ref{proposition:duality:relation:a}, and \citep[Theorem 3.8]{JacodShiryaevLimitTheorems}, give a way to explicitly compute the conditional entropic-risk of a $\mathcal{F}_T$-measurable random-variable via the formula
\begin{align}\label{eq_CER}
\rho_{t,\theta_t}^{ent}(X_T) = &\mathbb{E}_{\mathbb{P}^{\star}}\left[
X_T
\middle| \mathcal{F}_t \right]
-
\frac{H_{t,T}\left(\mathbb{P}^{\star}|\mathbb{P}\right)}{\theta_t}
,
\end{align}
where the risk-aversion level $\theta_t$ is $X_T$-compatible.  The predictability of the risk-aversion level is interpreted as the ability to adjust risk-aversion as new information arrives.  Equation~\eqref{eq_CER} can be interpreted as saying, that the conditional entropic risk with respect to $X_t$ is equivalent to a correction of its conditional expectation, under $\mathbb{P}^{\star}$.  The correction for the undertaken risk-correction is realized through the term $H_{t,T}(\mathbb{P}^{\star}|\mathbb{P})$.  

The financial interpretations of the entropic measure transform are explored in the next section.  The examples will focus on the term-structure of interest rates, defaultable bonds, futures prices, and forward prices.  

\section{Pricing via the Entropic Measure Transform}\label{s_EMT_if}
We begin by exploring the connections and application of Theorem~\eqref{thrm_main} in mathematical finance. Specifically, we consider applications to bond pricing, futures prices, and forward prices.  We first explore the connection of the entropic measure transform to bond pricing.  

\subsection{Short-Rate Models for Defaultable Bond Pricing}\label{default}
Let $T> 0$, be the investment horizon and $\mathbb{P}$ is a martingale measure for the for the short-rate, using the money market account as num\'{e}raire.  The short-rate will be modeled as $r(X_t)$, where $r$ is a twice-continuously differentiable function from $\mathbb{R}^n$ to $(0,\infty)$ and , $\mathcal{F}_s$-adapted factor process following
\begin{equation}\label{eq:SDE:forward:a}
dX_s=f(s,X_{s-})ds+g(s,X_{s-})dW^\mathbb{P}_s+zN^\mathbb{P}(ds,dz),
\end{equation}
where $W^\mathbb{P}$ is an $n$-dimensional $(\mathcal{F},\mathbb{P})$-Brownian motion and where the random measure $N^\mathbb{P}$ is an $\mathbf{R}^n$-valued random measure with compensator
\begin{equation*}
\eta(ds,dz)=v(dz)\lambda(X_{s-})ds,
\end{equation*}
where $v(\cdot)$ is a measure on $\mathbf{R}^n$, $\lambda(\cdot)$ is a function to be specified from $\mathbf{R}^n$ to $\mathbf{R}$. The  compensated random measure is
\begin{equation}
\tilde N^\mathbb{P}(dt,dz)=N^\mathbb{P}(dt,dz)-v(dz)\lambda(X_{t-})dt.
\label{eq_random_measure_descr}
.
\end{equation}

The price of a default-free zero-coupon bond at time $t\in 
[0,T]$ is then given by
\begin{equation} \label{eq:bondprice:a}
P(t,T)=\mathbb{E}_{\mathbb{P}}\left[e^{-\int_{t}^{T}r(X_s)ds} \middle|\mathcal{F}_t\right]
; \quad T\geq t\geq 0
.
\end{equation}

We consider a defaultable zero coupon bond with the promised payoff of $\$1$ at maturity, and denote the price at time $t\in[0,T]$ by $D(t,T)$. Unlike default-free bonds, the issuer of defaultable bonds, such as corporate bonds, may default before the maturity  in which case the bondholders will not receive the promised payment in full but a recovery payment. There are different recovery schemes  if default occurs before the bond's maturity according to the timing and the amount of recovery payment (see \citet[Section 1.1.1]{bielecki2002credit} and \citet{altman2004default}). For instance, if a fixed fraction of the bond's face value is paid to the bondholder at maturity $T$ in case of default, then the bond has the random payoff at maturity
$$C_T=\mathbf{1}_{\{\tau>T\}}+\eta \mathbf{1}_{\{\tau\leq T\}}$$
where $\tau$ is the default time. If a fixed fraction of the pre-default market value of the bond value is paid  at time of default, then the equivalent random payoff of the bond is
$$C_T=\mathbf{1}_{\{\tau>T\}}+\eta P(\tau-,T)e^{\int_\tau^Tr_vdv} \mathbf{1}_{\{\tau\leq T\}}.$$
The time of default $\tau$ is also modelled differently. Under the structural credit risk models originating with \citet{merton1974pricing} the default of corporate bonds occurs when the value of the firm reaches a certain lower threshold. Reduced form credit risk models, such as \citet{duffie1999modeling}, assume that default is driven by  an exogenous default process.

Since the recovery scheme is not our main concern in this paper, we will in general represent the equivalent  payoff of defaultable bonds with a random payoff $C_T$, and assume that the price is given by (see \citet{duffie1999modeling})
\begin{equation}
D(t,T)=\mathbb{E}_{\mathbb{P}}\left[
e^{-\int_{t}^{T}r(X_v)dv}\cdot C_T \middle|\mathcal{F}_t
\right], 
\label{eq_defaultable_priceosku}
\end{equation}
where $C_T$ is an $\mathcal{F}_T$-measurable random variable valued in $[0,1]$.

In the extreme situation $C_T=0$ of a complete default, in which the bondholders receive no recovery payment in the event of default,  the bonds become worthless. In this paper, we exclude the occurrence of complete default by assuming $\mathbb{P}(C_T=0)=0$. We will later explain why we have to make this technical assumption. The other extreme case of default-free bonds is included in our model if we assume $\mathbb{P}(C_T=1)=1$.  The default-free case is captured by equation~\eqref{eq_defaultable_priceosku}, if $\mathbb{P}\left(C_T=1\right)=1$.  

Theorem~\ref{thrm_main} and equation~\eqref{eq:SDE:forward:a}, directly implies that the defaultable bond price $D(t,T)$ is characterized by the following EMT problem.  
\begin{corollary}\label{cor_bond_pricing}
	Under the modeling scheme described by equations~\eqref{eq:SDE:forward:a} and \eqref{eq:bondprice:a} the price of a zero-coupon bond is
	\begin{align}
	\label{cor_bond_pricing_eq_descriptive}
	D(t,T) 
	= &
	\exp\left(
	\mathbb{E}_{\mathbb{P}^{\star}}\left[
	\int_t^T r(X_u)du - \ln(C_T)\middle | \mathcal{F}_t
	\right]
	\right)\kappa_t,\\
	\nonumber
	\kappa_t\triangleq &
	\exp\left(
	\mathbb{E}_{\mathbb{P}}\left[
	\int_t^T\frac{1}{2}Z_vZ_v^\prime
	dv-\int_t^T Z_vdW^\mathbb{P}_v
	\middle| \mathcal{F}_t\right]
	\right),%
	\end{align}
	where $Z_t$ is defined through the decoupled FBSDE
	\begin{align}
	&X_s=X_t+\int_t^s f(v,X_{v-})ds+\int_t^s
	g(v,X_{v-})dW_v^\mathbb{P}+\int_t^s\int_{\mathbf{R}^n}zN^\mathbb{P}(dv,dz),\label{FBSDE:jumps:forward:a}	\\
		\label{FBSDE:jumps:backward:a}
	&Y_s=-\ln(C_T) + \int_s^T\Big\{r(X_{v-})-\big[\int_{\mathbf{R}^n}(e^{G(v,z)}-1)v(dz)\big]\lambda(X_{v-})-
	\frac{1}{2}Z_vZ_v^\prime\Big\}dv   \\
		\nonumber
	&\quad \quad +\int_s^TZ_vdW_v^\mathbb{P}+\int_s^T\int_{\mathbf{R}^n}G(v,z)N^\mathbb{P}(dv,dz),	
\\
	\nonumber
	& Y_s\triangleq \ln \left\{\mathbb{E}_{\mathbb{P}}[e^{-\int_{t}^{T}r(X_u)du}\cdot C_T |\mathcal{F}_s]\right\}.
	\end{align}
	Here, the entropic measure transform, $\mathbb{P}^{\star}$ of $\mathbb{P}$, is given by
	\begin{align}
	\left.\frac{d \mathbb{Q}^\star}{d\mathbb{P}}\right|_{\mathcal{F}_T}&=\exp\left\{-	 
	\int_t^T\frac{1}{2}Z_vZ_v^\prime dv+\int_t^T
	Z_vdW^\mathbb{P}_v-\int_t^T\int_{\mathbf{R}^n}\lambda(X_{v-})(e^{G(v,z)}-1)v(dz)dv\right.\nonumber\\
	&\left.~~~+ \int_t^T\int_{\mathbf{R}^n}G(v,z)N^\mathbb{P}(dv,dz)\right\}.
	\label{eq:EMT:jumps:optimal:measure:a}	
	\end{align}

\end{corollary}
\begin{proof}
Combining the backwards equation~\eqref{thrm_main_defining_system}, equation~\eqref{eq_random_measure_descr} and the SDE \eqref{eq:SDE:forward:a} forms the decoupled FBSDE of equation~\eqref{FBSDE:jumps:forward:a}.  
\end{proof}
\begin{remark}
	Suppose a financial agent pays $c$ to buy one unit of the bond at time t, and receives a payoff of $C_T$  at maturity $T$. The internal logarithmic return on the investment over the time period $[t,T]$ is
	$$\gamma=\ln \frac{C_T}{c}.$$ 
	The excess return over the risk-free rate, $\tilde\gamma$,  is given by
	$$\tilde\gamma=\gamma-\int_{t}^{T}r(X_v)dv,$$
	which measures the investment performance.
	Note that the entropic measure transform of $\mathbb{P}$, in the defaultable bond price setting, is equivalent to 
	\begin{align}
	\notag \ln\frac{D(t,T)}{c}&=-\inf_{\mathbb{Q}\in \mathcal{P}_t(1)}\big\{\mathbb{E}_{\mathbb{Q}}[-\tilde\gamma|\mathcal{F}_t]+H_{t,T}(\mathbb{P}^{\star}|\mathbb{P})\big\}\\
	&=\sup_{\mathbb{Q}\in
		\mathcal{P}_t(1)}\big\{\mathbb{E}_{\mathbb{Q}}[\tilde\gamma|\mathcal{F}_t]-H_{t,T}(\mathbb{P}^{\star}|\mathbb{P})\big\}.\label{eq:EMT:defautable:interpretation}
	\end{align}
	The aggregate relative entropy $H_{t,T}(\mathbb{P}^{\star}|\mathbb{P})$ in equation
	(\ref{eq:EMT:defautable:interpretation}) can be interpreted as penalty for removing financial risk
	composed of market risk (volatility risk) and credit risk in the framework of our model. The right-hand
	side of equation (\ref{eq:EMT:defautable:interpretation})  maximizes the  excess  (risk-adjusted) return
	on the investment, which is equal to the equivalent instantaneous return given by left-hand side of
	equation (\ref{eq:EMT:defautable:interpretation}).
\end{remark}

We will discuss the explicit solution to the FBSDE \eqref{FBSDE:jumps:forward:a}-\eqref{FBSDE:jumps:backward:a} in the case of ATSMs and QTSMs, respectively.  The possibility of default leads to solutions with an extra component compared to those considered by \citet{hyndman2009forward} and \citet{Zhou2010}.

\subsubsection{Non-Defaultable Bond Case with ATSM}
In the framework of ATSMs with jumps, we make the following specifications on the coefficients of FBSDE  \eqref{FBSDE:jumps:forward:a}-\eqref{FBSDE:jumps:backward:a} as follows
\begin{enumerate}[(i)]
	\item $f(s,x)=Ax+B$
	\item $g(s,x)=S\text{diag}\sqrt{\alpha_i+\beta_i x}$
	\item $r(x)=R^\prime x+k$
	\item $\lambda(x)=L^\prime x+l$
\end{enumerate}
where $A$ is an $(n\times n)$-matrix of scalars, $B$, $R$ and $L$ are $(n\times 1)$-vectors, for each
$i\in\{1,\ldots,n\}$ the $\alpha_i$ are scalars,  for each $i\in\{1,\ldots,n\}$ the
$\beta_i=(\beta_{i1},\ldots,\beta_{in})$ are $(1\times n)$-vectors, $S$ is a non-singular $(n\times
n)$-matrix, $k$ and $l$ are  scalars. Then FBSDE  (\ref{FBSDE:jumps:forward:a})-(\ref{FBSDE:jumps:backward:a}) becomes 
\begin{align}
X_s&=X_t+\int_t^s\big[AX_{v-}+B\big]dv+\int_t^s S\text{diag}\sqrt{\alpha_i+\beta_i X_{v-}}dW^{\mathbb{P}}_v
+\int_t^s\int_{\mathbf{R}^n}z N^{\mathbb{P}}(dv,dz),\label{FBSDE:jumps:ATSM:forward:a}\\
Y_s&=\int_s^T\Big\{\big[\big(\int_{\mathbf{R}^n}(e^{G(v,z)}-1)v(dz)\big) L^\prime-R^\prime
\big]X_{v-}-\big(\int_{\mathbf{R}^n}(e^{G(v,z)}-1)v(dz)\big) l+k-\frac{1}{2}Z_vZ^\prime_v\Big\}dv \nonumber\\ 
&\quad +\int_s^TZ_vdW^{\mathbb{P}}_v +\int_s^T\int_{\mathbf{R}^n}G(v,z)N^{\mathbb{P}}(dv,dz).\label{FBSDE:jumps:ATSM:backward:a} 
\end{align}
We will give the explicit solution to FBSDE
(\ref{FBSDE:jumps:ATSM:forward:a})-(\ref{FBSDE:jumps:ATSM:backward:a}) by applying a similar technique to \citet{hyndman2009forward} which extends the approach for linear FBSDEs from \citet{ma1999forward}. In the statement of the following proposition, as in \citet{hyndman2009forward}, we shall adopt the notation of \citet{landen2002term} to write
$$S\text{diag}(\alpha_i+\beta_ix)S^\prime=k_0+\sum_{j=1}^{n}k_jx_j$$
for symmetric $(n\times n)$ matrices $k_j$, where $x_j$ is the $j$th element of a vector $x\in D$. Define the $(n^2\times n)$ matrix $K$ and, given a $(1\times n)$ row vector $\mathbf{\underline{y}}$, the $n\times n^2$ matrix $\beta(\mathbf{\underline{y}})$ by
$$K=
\begin{bmatrix}
k_1\\
k_2\\
\vdots\\
k_n
\end{bmatrix}
\quad \text{and}
\quad \beta(\underline{y})=
\begin{bmatrix}
\underline{y}&0_{1\times n}&\cdots&0_{1\times n}\\
0_{1\times n}&\underline{y}&&\\
\vdots&&\ddots&\vdots\\
0_{1\times n}&\cdots&&\underline{y}
\end{bmatrix}
$$
respectively.

\begin{theorem}\label{theorem:BSDE:jump:solution}
	If the Riccati equation
	\begin{align}
	&\dot{U}_s+U_sA+\frac{1}{2}U_sK^\prime[\beta(U_s)]+\Big[\int_{\mathbf{R}^n}(e^{U_sz}-1)v(dz)\Big]L^\prime-
	R^\prime=0, \quad t\in[0,T] 	\label{eq:Riccati:ATSM:jump}\\
	&U_T=0,	\label{eq:Riccati:ARSM:jump:ter:condtion}
	\end{align}
	admits a unique bounded solution $U(\cdot)$ over the interval $[0,T]$, then the FBSDE
	(\ref{FBSDE:jumps:ATSM:forward:a})-(\ref{FBSDE:jumps:ATSM:backward:a}) admits a unique solution and
	$(Y,Z,G)$ has explicit expression in terms of $X$ as follows
	\begin{align} 
	&Y_s= -(U_sX_s+p_s), \label{eq:theorem:ATSM:jump:solution:Y}\\
	&Z_s=U_sS\text{diag}(\sqrt{\alpha_i+\beta_i X_{s-}}), %
	\label{eq:theorem:ATSM:jump:solution:Z}\\
	&G(s,z)=U_sz, \label{eq:theorem:ATSM:jump:solution:G}
	\end{align}
	where $p_s$ is given by
	\begin{equation}\label{eq:theorem:ATSM:jump:solution:p}
	p_s=-\int_s^T\left(k-l\int_{\mathbf{R}^n}(e^{U_vz}-1)v(dz)-\frac{1}{2}U_vk_0U_v^\prime-U_vB\right) dv
	\end{equation}
\end{theorem}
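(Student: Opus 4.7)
The plan is to verify an affine ansatz for $Y$ in $X$, together with the corresponding induced ansätze for $Z$ and $G$, by applying It\^o's formula for jump--diffusions and matching coefficients against the BSDE \eqref{FBSDE:jumps:ATSM:backward:a}. By analogy with the jump-free ATSM case (see \eqref{FBSDE:ATSM:solution:Y:a}--\eqref{FBSDE:ATSM:solution:Z:a}) I postulate
$$Y_s = -(U_s X_s + p_s),\quad Z_s = U_s S\,\mathrm{diag}\bigl(\sqrt{\alpha_i+\beta_i X_{s-}}\bigr),\quad G(s,z) = U_s z,$$
with $U_\cdot$ and $p_\cdot$ deterministic and $\mathcal{C}^1$, and terminal conditions $U_T=0$, $p_T=0$ so that $Y_T=0$. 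Because $Y$ is an affine function of $X$, the jump--diffusion It\^o formula produces no second-order correction, and using \eqref{FBSDE:jumps:ATSM:forward:a} gives
\begin{equation*}
dY_s = -\bigl[\dot U_s X_{s-} + U_s(AX_{s-}+B) + \dot p_s\bigr]\,ds - U_s S\,\mathrm{diag}\bigl(\sqrt{\alpha_i+\beta_i X_{s-}}\bigr)dW^{\mathbb{P}}_s - U_s\!\int_{\mathbf{R}^n}\! z\,N^{\mathbb{P}}(ds,dz).
\end{equation*}

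The next step is to compare this SDE with the differential form of \eqref{FBSDE:jumps:ATSM:backward:a}. Matching the $dW^{\mathbb{P}}_s$-integrand immediately yields \eqref{eq:theorem:ATSM:jump:solution:Z}, and matching the $N^{\mathbb{P}}(ds,dz)$-integrand yields \eqref{eq:theorem:ATSM:jump:solution:G}. Substituting these into the driver of the BSDE and using the convention $S\,\mathrm{diag}(\alpha_i+\beta_i x)S^\prime = k_0 + \sum_j k_j x_j$ expresses $\tfrac12 Z_s Z_s^\prime$ as an affine function of $X_{s-}$; the jump compensator term is already affine in $X_{s-}$ because $\lambda(x)=L^\prime x+l$. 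Equating the coefficient of $X_{s-}$ on both sides produces precisely the Riccati equation \eqref{eq:Riccati:ATSM:jump}, in which the shorthand $\tfrac12 U_s K^\prime[\beta(U_s)]$ stands for the row vector whose $j$th component is $\tfrac12 U_s k_j U_s^\prime$. Equating the $X_{s-}$-free terms gives the scalar linear ODE
\begin{equation*}
\dot p_s + U_s B + \tfrac12 U_s k_0 U_s^\prime - k + l\!\int_{\mathbf{R}^n}(e^{U_s z}-1)\,v(dz) = 0,\quad p_T=0,
\end{equation*}
which integrates directly to the closed form \eqref{eq:theorem:ATSM:jump:solution:p}.

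By hypothesis the Riccati equation \eqref{eq:Riccati:ATSM:jump}--\eqref{eq:Riccati:ARSM:jump:ter:condtion} has a unique bounded solution $U_\cdot$ on $[0,T]$; this bound makes the coefficients of the $p$-ODE bounded, so $p_\cdot$ is uniquely determined by \eqref{eq:theorem:ATSM:jump:solution:p}. Boundedness of $U_\cdot$ together with standard moment estimates for affine jump--diffusions justifies the application of It\^o's formula and delivers $(Z,G)$ in the appropriate integrability spaces, so the triple so constructed is a bona fide solution of the FBSDE \eqref{FBSDE:jumps:ATSM:forward:a}--\eqref{FBSDE:jumps:ATSM:backward:a}.

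The hard part will be uniqueness of the FBSDE solution, since the driver is simultaneously quadratic in $Z$ and exponential in $G$ and so falls outside the Kobylanski-type framework used in the continuous case. To deal with this I would mirror the FBSDE characterization of Section~\ref{FBSDE} extended to jumps: for any adapted solution $(Y,Z,G)$ of \eqref{FBSDE:jumps:ATSM:backward:a} apply It\^o to $\xi_s:=\exp(-Y_s-\int_0^s r(X_v)\,dv)$ and verify that the $Z$- and $G$-contributions combine with the jump compensator to make $\xi_\cdot$ a local $\mathbb{P}$-martingale with terminal value $\xi_T=\exp(-\int_0^T r(X_v)\,dv)$; the boundedness of $U_\cdot$ (and hence of the candidate $Y$) upgrades $\xi_\cdot$ to a true martingale, forcing $Y_s = -\ln E_\mathbb{P}\bigl[e^{-\int_s^T r(X_v)\,dv}\mid\mathcal{F}_s\bigr]$, so $Y$ is unique. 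Then $Z$ and $G$ are uniquely recovered from the martingale representation theorem for jump--diffusions, and the explicit triple \eqref{eq:theorem:ATSM:jump:solution:Y}--\eqref{eq:theorem:ATSM:jump:solution:G} coincides with that unique solution.
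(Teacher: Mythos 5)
Your verification half is essentially the paper's argument: apply It\^o's formula to the affine ansatz $-(U_sX_s+p_s)$, match the $dW^{\mathbb{P}}$- and $N^{\mathbb{P}}$-integrands to get \eqref{eq:theorem:ATSM:jump:solution:Z} and \eqref{eq:theorem:ATSM:jump:solution:G}, and match the affine drift coefficients to obtain the Riccati equation \eqref{eq:Riccati:ATSM:jump} and the integrated ODE \eqref{eq:theorem:ATSM:jump:solution:p}. That part is correct and the coefficient bookkeeping (including reading $\tfrac12 U_sK^\prime[\beta(U_s)]$ componentwise as $\tfrac12 U_sk_jU_s^\prime$) agrees with the paper.

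Where you diverge is the existence/uniqueness step. The paper makes the exponential change of variables $\tilde Y_s=e^{-Y_s}$, $\tilde Z_s=-\tilde Y_sZ_s$, $\tilde G(s,z)=-\tilde Y_s(1-e^{G(s,z)})$, which converts the quadratic-in-$Z$, exponential-in-$G$ driver into a \emph{linear} BSDE with jumps, and then cites a standard well-posedness theorem (\citet[Theorem 3.1.1]{delong2013backward}). Your route --- showing that $\xi_s=\exp(-Y_s-\int_0^s r(X_v)dv)$ is a local martingale with terminal value $e^{-\int_0^T r\,dv}$ and identifying $Y$ from the conditional expectation --- is the probabilistic face of the same transformation and is a legitimate alternative; it has the advantage of identifying $Y_s=-\ln E_{\mathbb{P}}[e^{-\int_s^Tr(X_v)dv}\mid\mathcal{F}_s]$ directly and recovering $(Z,G)$ by martingale representation. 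However, as written it has a soft spot: you claim boundedness of $U$ makes ``the candidate $Y$'' bounded, but $Y_s=-(U_sX_s+p_s)$ is unbounded because $X$ is unbounded; what you actually need is an exponential-moment estimate for the affine jump--diffusion to show $\xi$ is of class (D). More importantly, for \emph{uniqueness} this integrability must be established for an \emph{arbitrary} adapted solution of \eqref{FBSDE:jumps:ATSM:backward:a}, not just for the explicit candidate --- otherwise you only get uniqueness within the class of solutions for which $\xi$ is a true martingale, and you should either state that restriction or supply the a priori bound. (To be fair, the paper's own reduction to a linear BSDE glosses over a comparable issue, namely that the linearized driver has unbounded random coefficients; but since you flagged uniqueness as ``the hard part,'' this is the gap you would need to close.)
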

\begin{proof}
	We first prove the decoupled FBSDE (\ref{FBSDE:jumps:ATSM:forward:a})-(\ref{FBSDE:jumps:ATSM:backward:a}) admits a unique solution $(X,Y,Z,G)$. 
	The SDE (\ref{FBSDE:jumps:ATSM:forward:a}) admits a unique solution. As $X_s$ is known, we consider the
	single BSDE (\ref{FBSDE:jumps:ATSM:backward:a}).
	If we let 
	$$	\tilde Y_s=e^{-Y_s}, \quad	\tilde Z_s=-\tilde Y_s\cdot Z_s, \quad \tilde G(z,s)=-\tilde Y_s(1-e^G(s,z)),$$
	BSDE (\ref{FBSDE:jumps:ATSM:backward:a}) becomes
	\begin{align}
	\tilde Y_s&=1+\int_s^T\Big\{  \big[\big(\int_{\mathbf{R}^n}(e^{G(v,z)}-1)v(dz)\big) L^\prime-R^\prime
	\big]X_{v-}-\big(\int_{\mathbf{R}^n}(e^{G(v,z)}-1)v(dz)\big) l+k  \Big\}\tilde Y_vdv \nonumber\\
	&\quad +\int_s^T\tilde Z_vdW^{\mathbb{P}}_v +\int_s^T\int_{\mathbf{R}^n}\tilde G(v,z)\tilde
	N^{\mathbb{P}}(dv,dz). \label{eq:theorem:ATSM:jump:proof:a}
	\end{align}
	By \citet[Theorem 3.1.1]{delong2013backward}, we know that the BSDE (\ref{eq:theorem:ATSM:jump:proof:a}) admits a unique solution
	$(\tilde Y,\tilde Z,\tilde G)$.  Therefore, the FBSDE
	(\ref{FBSDE:jumps:ATSM:forward:a})-(\ref{FBSDE:jumps:ATSM:backward:a})  admits a unique solution $(X,Y,Z,G)$.
	
	To prove the explicit expression of $(Y,Z,G)$, we need to show that $(Y,Z,G)$ given by equations
	(\ref{eq:theorem:ATSM:jump:solution:Y})-(\ref{eq:theorem:ATSM:jump:solution:G}) satisfies the BSDE
	(\ref{FBSDE:jumps:ATSM:backward:a}). 
	Apply It\^{o}'s formula to the function $\phi(s,x)=-(U_sx+p_s)$,  where $U_s$ is the solution to the
	Riccati equation (\ref{eq:Riccati:ATSM:jump}) and $p_s$ satisfies equation
	(\ref{eq:theorem:ATSM:jump:solution:p}). Let $Y_s=\phi(s,X_s)$, where $X_s$ is given by equation
	(\ref{FBSDE:jumps:ATSM:forward:a}), then we have
	\begin{align}
	&\quad  Y_T-Y_s \nonumber\\
	&=  -\int_s^T\left(\dot{U}_vX_{v-}+U_v(AX_{v-}+B)+k_0U_v^\prime+K^\prime[\beta(U_v)]^\prime
	X_{v-}\right)dv \nonumber\\
	&\quad -\int_s^TU_v
	S\text{diag}(\sqrt{\alpha_i+\beta_iX_v})dW^{\mathbb{P}}_v-\int_s^T\Big(k-l\int_{\mathbf{R}^n}(e^{U_vz}-1)v(dz)-\frac{1}{2}U_vk_0U_v^\prime
	-U_vB\Big) dv\nonumber\\ 
	&\quad-\int_s^T\int_{\mathbf{R}^n}U_vz\tilde N^{\mathbb{P}}(dv,dz)\nonumber\\
	&=-\int_s^T\bigg\{\Big(\dot{U}_v+U_vA+\frac{1}{2}U_vK^\prime[\beta(U_v)]+\Big[\int_{\mathbf{R}^n}(e^{U_vz}-1)v(dz)\Big]L^\prime-
	R^\prime\Big)X_v+\Big[R^\prime X_{v-}\nonumber\\
	&\quad +k+\frac{1}{2}\Big(U_vK^\prime [\beta(U_v)]^\prime
	X_{v-}+U_vk_0U^\prime_v\Big)\Big]\bigg\}dv+\int_s^T\Big(\Big[\int_{\mathbf{R}^n}(e^{U_vz}-1)v(dz)\Big](L^\prime
	X_{v-}+l)\Big)dv\nonumber\\
	&\quad -\Big\{U_vS\text{diag}(\sqrt{\alpha_i+\beta_i
		X_{v-}})\Big\}dW_v^{\mathbb{P}}-\int_s^T\int_{\mathbf{R}^n}U_vz\tilde N^{\mathbb{P}}(dv,dz) \label{eq:theorem:ATSM:jump:proof:b} 
	\end{align}
	Substituting equations (\ref{eq:theorem:ATSM:jump:solution:Y})-(\ref{eq:theorem:ATSM:jump:solution:G})
	into equation (\ref{eq:theorem:ATSM:jump:proof:b}) we have
	\begin{align*}
	Y_s&=Y_T+\int_s^T(R^\prime X_{v-}+k+\frac{1}{2}Z_vZ^\prime_v)dv-\int_s^T\int_{\mathbf{R}^n}(L^\prime X_{v-}+l)\big(e^{G(v,z)}-1\big)v(dz)dv\\
	&\quad +\int_s^TZ_vdW^{\mathbb{P}}_v +\int_s^T\int_{\mathbf{R}^n}G(s,z)N^{\mathbb{P}}(ds,dz)
	\end{align*}
	By the boundary condition of \eqref{eq:Riccati:ARSM:jump:ter:condtion} and \eqref{eq:theorem:ATSM:jump:solution:p} we have
	\begin{equation*}
	Y_T=-(U_TX_T+p_T)=0.
	\end{equation*}
	Therefore,
	\begin{align*}
	Y_s&=\int_s^T(R^\prime X_{v-}+k+\frac{1}{2}Z_vZ^\prime_v)dv-\int_s^T\int_{\mathbf{R}^n}(L^\prime X_{v-}+l)\big(e^{G(v,z)}-1\big)v(dz)dv\\
	&\quad +\int_s^TZ_vdW^{\mathbb{P}}_v +\int_s^T\int_{\mathbf{R}^n}G(v,z)N^{\mathbb{P}}(dv,dz)
	\end{align*}
	Hence $(Y,Z,G)$ given by equations
	(\ref{eq:theorem:ATSM:jump:solution:Y})-(\ref{eq:theorem:ATSM:jump:solution:G}) satisfy BSDE
	(\ref{FBSDE:jumps:ATSM:backward:a}).
\end{proof}

\begin{remark}
	The complete discussion on the Riccati equation of the form as in \eqref{eq:Riccati:ATSM:jump} can be found in \citet[Section 6]{duffie2003affine}.
\end{remark}

\subsubsection{Defaultable Case with ATSM and No Jumps}

Under the framework of ATSMs, with no jumps the FBSDE \eqref{FBSDE:jumps:forward:a}-\eqref{FBSDE:jumps:backward:a} becomes 
\begin{align}
&X_s=X_t+\int_t^s \left(AX_v+B\right)dv+\int_t^sS\text{diag}\sqrt{\alpha_i+\beta_i X_v}dW^{\mathbb{P}}_v \label{FBSDE:defautable:ATSM:forward:a}\\
&Y_s=-\ln C_T+\int_s^T(R^\prime X_v+k-\frac{1}{2}Z_vZ_v^\prime)dv+\int_s^TZ_vdW^{\mathbb{P}}_v \label{FBSDE:defautable:ATSM:backward:a}
\end{align}
The following result can be seen as a generalization of \citet[Theorem 3.2]{hyndman2009forward} by incorporating a random terminal condition representing the recovery amount in the case of default.
\begin{theorem} \label{theorem:FBSDE:defautable:solution}
	If the Riccati equation 
	\begin{align}
	&\dot{U}_s+U_sA+\frac{1}{2}U_sK^\prime[\beta(U_s)]- R^\prime=0, \quad s\in[0,T] \label{eq:Riccati:defautable}\\
	&U_T=0 \label{eq:Riccati:defaultable:boundary}
	\end{align}
	admits a unique bounded solution $U(\cdot)\in \mathbf{R}^n$ over the interval $[0,T]$, then  FBSDE
	\eqref{FBSDE:defautable:ATSM:forward:a}-\eqref{FBSDE:defautable:ATSM:backward:a} admits a unique solution
	and the solution $(Y,Z)$ has explicit expression in terms of $X$
	\begin{align}
	Y_s&= -(U_sX_s+p_s),\quad \text{and} \label{eq:FBSDE:defautable:rep:Y}	\\
	Z_s&=U_sS\text{diag}(\sqrt{\alpha_i+\beta_i X_s})+z_s, \label{eq:FBSDE:defautable:rep:Z}	
	\end{align}
	where $(p_s, z_s)$ solves the following BSDE
	\begin{equation}\label{eq:FBSDE:defautable:p}
	p_s=-\ln C_T-\int_s^T\left(k-\frac{1}{2}U_vk_0U_v^\prime-U_vB+\frac{1}{2}z_v z_v^\prime\right) dv-\int_s^Tz_vdW^{\mathbb{P}}_v.  
	\end{equation}
\end{theorem}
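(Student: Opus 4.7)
The plan is to treat the two claims separately: first establish existence and uniqueness for the decoupled FBSDE \eqref{FBSDE:defautable:ATSM:forward:a}--\eqref{FBSDE:defautable:ATSM:backward:a}, then verify by direct computation that the triple defined by \eqref{eq:FBSDE:defautable:rep:Y}--\eqref{eq:FBSDE:defautable:rep:Z} together with the BSDE \eqref{eq:FBSDE:defautable:p} does solve it. Since the system is decoupled, the forward SDE \eqref{FBSDE:defautable:ATSM:forward:a} is a standard affine SDE in $\mathbb{R}^n$ and admits a unique strong solution on the admissible state domain by the usual Yamada--Watanabe / Duffie--Kan arguments, so I would immediately reduce attention to the BSDE with $X$ given.

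For the BSDE I would mimic the exponential change of variable used in the proof of Theorem~\ref{theorem:BSDE:jump:solution}: set $\tilde Y_s = e^{-Y_s}$ and $\tilde Z_s = -\tilde Y_s Z_s$, so that the quadratic term $-\tfrac{1}{2}Z_vZ_v^\prime$ is absorbed and the transformed equation becomes a linear BSDE
\begin{equation*}
\tilde Y_s = C_T + \int_s^T (R^\prime X_v + k)\tilde Y_v\, dv - \int_s^T \tilde Z_v\, dW_v^{\mathbb{P}}.
\end{equation*}
The terminal value $C_T\in[0,1]$ is bounded (in particular in $L^2(\mathbb{P})$), and the coefficient $R^\prime X_v+k$ is adapted with enough integrability under standard affine conditions, so classical linear BSDE theory (Pardoux--Peng) provides a unique solution $(\tilde Y,\tilde Z)$. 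Reverting the transform — using $\mathbb{P}(C_T=0)=0$ to keep $\tilde Y_s>0$ almost surely so that $\ln \tilde Y_s$ is well defined — yields the unique $(Y,Z)$. This is where the assumption $\mathbb{P}(C_T=0)=0$ gets used.

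For the explicit representation I would apply It\^o's formula to $\phi(s,x)=-(U_s x + p_s)$ along $X$, using the Riccati ODE \eqref{eq:Riccati:defautable} for $\dot U_s$ and the SDE differential of $p_s$ read off \eqref{eq:FBSDE:defautable:p}. The affine specification $g(s,X)g(s,X)^\prime = k_0+\sum_j k_jX^{(j)}$ expresses $\tfrac12 M_sM_s^\prime$ (with $M_s=U_sS\,\mathrm{diag}\sqrt{\alpha_i+\beta_i X_s}$) as $\tfrac12 U_sk_0U_s^\prime + \tfrac12 U_sK^\prime[\beta(U_s)]X_s$, and substituting into the drift reveals exactly the Riccati expression multiplying $X_s$; the constant in $X_s$ is absorbed by the $p_s$-BSDE's driver, and the cross terms coming from $\tfrac12 Z_sZ_s^\prime = \tfrac12 M_sM_s^\prime + M_s z_s^\prime + \tfrac12 z_sz_s^\prime$ are handled by a careful check of the driver of \eqref{eq:FBSDE:defautable:p} (sign conventions require particular attention; the terminal condition $Y_T=-\ln C_T$ fixes $p_T$ through $U_T=0$). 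Uniqueness for $(p_s,z_s)$ follows once more from Pardoux--Peng applied to \eqref{eq:FBSDE:defautable:p}, whose driver is quadratic in $z$ but can again be linearised by $\tilde p_s=e^{-p_s}$.

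The main obstacle is conceptual rather than technical: in the default-free ATSM of \citet{hyndman2009forward} the analogue of $p_s$ is purely deterministic and solves an ODE, but here the random terminal payoff $-\ln C_T$ forces $p_s$ to carry a nontrivial martingale part driven by $z_s$, and consequently the Girsanov kernel $Z_s$ picks up the extra term $z_s$ beyond the $X$-dependent affine piece $U_sS\,\mathrm{diag}\sqrt{\alpha_i+\beta_iX_s}$. Making sure this decomposition is consistent — namely that the Riccati equation absorbs all $X$-dependent drift terms while the residual randomness is cleanly carried by the simpler BSDE for $(p_s,z_s)$ — is the crux of the verification.
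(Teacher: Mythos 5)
Your overall strategy coincides with the paper's proof: unique strong solution of the forward equation, the exponential change of variables $\tilde Y=e^{-Y}$, $\tilde Z=-\tilde YZ$ to linearize the quadratic BSDE and obtain existence and uniqueness (with $\mathbb{P}(C_T=0)=0$ ensuring $\tilde Y>0$ so the transform can be inverted), the same linearization for the $(p,z)$ equation, and an It\^{o} verification of the affine ansatz in which the Riccati equation absorbs the terms linear in $X$. The existence and uniqueness half of your argument is fine.

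The gap is precisely the step you defer to ``a careful check of the driver of \eqref{eq:FBSDE:defautable:p}'': carried out as you describe --- along the \emph{decoupled} dynamics $dX_s=(AX_s+B)\,ds+g(s,X_s)\,dW_s^{\mathbb{P}}$, with $dp_s$ read off \eqref{eq:FBSDE:defautable:p} literally so that its martingale part is $z_s\,dW_s^{\mathbb{P}}$ --- the computation does not close. Writing $M_s=U_sS\text{diag}(\sqrt{\alpha_i+\beta_iX_s})$, It\^{o}'s formula gives for $Y_s=-(U_sX_s+p_s)$ the drift $-\bigl[(\dot U_s+U_sA)X_s+k-\tfrac12U_sk_0U_s^\prime+\tfrac12z_sz_s^\prime\bigr]$, whereas the target drift $-\bigl[R^\prime X_s+k-\tfrac12Z_sZ_s^\prime\bigr]$ contains in addition $+M_sz_s^\prime+\tfrac12z_sz_s^\prime$ from the expansion $\tfrac12Z_sZ_s^\prime=\tfrac12M_sM_s^\prime+M_sz_s^\prime+\tfrac12z_sz_s^\prime$; after the Riccati equation cancels the $X$-linear part, the two sides still differ by $Z_sz_s^\prime=M_sz_s^\prime+z_sz_s^\prime$, a term the driver of \eqref{eq:FBSDE:defautable:p} does not contain and cannot contain without coupling to $X$. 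The paper's own displayed computation closes because its expression for $dX_s$ carries the Girsanov correction $g(s,X_s)Z_s^\prime=k_0U_s^\prime+K^\prime[\beta(U_s)]^\prime X_s+S\text{diag}(\cdot)z_s^\prime$, i.e.\ the verification is performed along the $\mathbb{Q}^\star$-dynamics and the stochastic integrals in \eqref{eq:FBSDE:defautable:p} must be read with respect to $W^{\mathbb{Q}^\star}$; translating $z_v\,dW_v^{\mathbb{Q}^\star}=z_v\,dW_v^{\mathbb{P}}-z_vZ_v^\prime\,dv$ supplies exactly the missing term. So you must either verify under $\mathbb{Q}^\star$ with the coupled forward equation, or, staying under $\mathbb{P}$, replace the driver term $+\tfrac12z_vz_v^\prime$ in \eqref{eq:FBSDE:defautable:p} by $-\tfrac12z_vz_v^\prime-z_v\bigl(U_vS\text{diag}(\sqrt{\alpha_i+\beta_iX_v})\bigr)^\prime$. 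Relatedly, your (correct) observation that $Y_T=-\ln C_T$ and $U_T=0$ pin down $p_T$ yields $p_T=+\ln C_T$, the opposite sign to the terminal value displayed in \eqref{eq:FBSDE:defautable:p}; both of these sign/measure issues must be resolved consistently for the verification to go through.
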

\begin{proof}
	We first prove the decoupled FBSDE \eqref{FBSDE:defautable:ATSM:forward:a}-\eqref{FBSDE:defautable:ATSM:backward:a} admits a unique solution $(X,Y,Z)$. 
	Under our assumptions the SDE~\eqref{FBSDE:defautable:ATSM:forward:a} admits a unique solution. Given $X_s$, we consider the  BSDE \eqref{FBSDE:defautable:ATSM:backward:a}.
	If we let 
	$$	\tilde Y_s=e^{-Y_s}, \quad	\tilde Z_s=-\tilde Y_s\cdot Z_s,$$
	the BSDE \eqref{FBSDE:defautable:ATSM:backward:a} becomes
	\begin{equation}
	\tilde Y_t=C_T+\int_t^T\big[  R^\prime X_s+k  \big]\tilde Y_sds +\int_t^T\tilde Z_sdW^{\mathbb{P}}_s.
	\label{c010708} 
	\end{equation}
	Clearly the BSDE \eqref{FBSDE:defautable:ATSM:backward:a} admits a unique solution
	$(\tilde Y,\tilde Z)$ so the FBSDE
	\eqref{FBSDE:defautable:ATSM:forward:a}-\eqref{FBSDE:defautable:ATSM:backward:a}  admits a unique solution $(X,Y,Z)$.
	Using the same technique, we can also prove BSDE \eqref{eq:FBSDE:defautable:p} admits a unique solution $(p,z)$.
	
	To prove the explicit representation of $(Y,Z)$, we need to show $(Y,Z)$ given by equations
	(\ref{eq:FBSDE:defautable:rep:Y})-(\ref{eq:FBSDE:defautable:rep:Z})	 satisfies the BSDE \eqref{FBSDE:defautable:ATSM:backward:a}. 
	Apply It\^{o}'s formula to the function $\phi(s,x,p)=-(U_s x+p)$  where $U_s$ is the solution to \eqref{eq:Riccati:defautable}. Let $Y_s=\phi(s,X_s, p_s)$ where $X_s$ is given by \eqref{FBSDE:defautable:ATSM:forward:a} and $p_s$ satisfies \eqref{eq:FBSDE:defautable:p}.  Then we have
	\begin{align}
	\notag  dY_s &=  -\left(\dot{U}_sX_s+U_s(AX_s+B+k_0U_s^\prime+K^\prime[\beta(U_s)]^\prime X_s+S\text{diag}(\sqrt{\alpha_i+\beta_iX_s})z_s^\prime\right)ds\\
	\notag  &-U_s S\text{diag}(\sqrt{\alpha_i+\beta_iX_s})dW^\mathbb{P}_s-\left(k-\frac{1}{2}U_sk_0U_s^\prime-U_sB+\frac{1}{2}z_s z_s^\prime\right) ds-z_sdW^{\mathbb{P}}_s \\
	\notag  &=-\bigg\{\Big(\dot{U}_s+U_sA+\frac{1}{2}U_sK^\prime[\beta(U_s)]- R^\prime\Big)X_s+\Big[R^\prime X_s+k+\frac{1}{2}\Big(U_sK^\prime [\beta(U_s)]^\prime X_s+U_sk_0U^\prime_s\\
	&+2U_sS\text{diag}(\sqrt{\alpha_i+\beta_i X_s})z^\prime_s+z_s z_s^\prime\Big)\Big]\bigg\}ds-\Big\{U_sS\text{diag}(\sqrt{\alpha_i+\beta_i X_s})+z_s\Big\}dW_s^{\mathbb{P}}. \label{eq:theorem:FBSDE:defautable:proof:a}
	\end{align}
	Substituting equations \eqref{eq:Riccati:defautable} and \eqref{eq:FBSDE:defautable:rep:Z} into
	equation \eqref{eq:theorem:FBSDE:defautable:proof:a} we have
	\begin{equation*}
	dY_s=-(R^\prime X_s+k+\frac{1}{2}Z_sZ_s^\prime)ds-Z_sdW^{\mathbb{P}}_s
	\end{equation*}
	Thus $(Y_s,Z_s)$ defined by equations \eqref{eq:FBSDE:defautable:rep:Y}-\eqref{eq:FBSDE:defautable:rep:Z}	 satisfies
	\begin{equation*}
	Y_s=Y_T+\int_s^T(R^\prime X_v+k+\frac{1}{2}Z_vZ_v^\prime)dv+\int_s^TZ_vdW^{\mathbb{P}}_v
	\end{equation*}
	By the boundary conditions in equations \eqref{eq:Riccati:defaultable:boundary} and \eqref{eq:FBSDE:defautable:p} we have
	\begin{equation*}
	Y_T=-\ln C_T
	\end{equation*}
	Therefore,
	\begin{equation*}
	Y_s=-\ln C_T+\int_s^T(R^\prime X_v+k-\frac{1}{2}Z_vZ_v^\prime)dv+\int_s^TZ_vdW^{\mathbb{P}}_v.
	\end{equation*}
\end{proof}

\begin{remark}
	The existence and uniqueness of the solution to the Riccati equation \eqref{eq:Riccati:defautable} is shown  in \citet[Section 6]{duffie2003affine} where  a class of generalized Riccati equations has been considered.
\end{remark}
Note that the representation of $(Y,Z)$ of the FBSDE
(\ref{eq:FBSDE:defautable:rep:Y})-(\ref{eq:FBSDE:defautable:rep:Z}) is not completely explicit, since the term $z_t$ is to be determined by the quadratic BSDE \eqref{eq:FBSDE:defautable:p}.  Fortunately we can convert the quadratic BSDE \eqref{eq:FBSDE:defautable:p} into a linear BSDE by letting
$$	\tilde p_t=e^{-p_t}, \quad	\tilde z_t=\tilde p_t\cdot z_t,$$
then the BSDE \eqref{eq:FBSDE:defautable:p} becomes
\begin{equation}\label{eq:FBSDE:defautable:numerical}
\tilde p_t=\frac{1}{C_T}+\int_t^T(k-\frac{1}{2}U_sk_0U_s^\prime-U_sB)\tilde p_sds+\int_t^T\tilde
z_sdW^{\mathbb{P}}_s.
\end{equation}
In the excluded case that $P(C_{T}=0)>0$ then (\ref{eq:FBSDE:defautable:numerical}) would be a BSDE with singular terminal condition.

With further specification of $C_T$ through a specific a default mechanism and recovery scheme the linear BSDE \eqref{eq:FBSDE:defautable:numerical} can either be solved analytically or numerically.  There is an extensive literature focused on the numerical solution schemes for BSDEs which we shall not discuss. Nevertheless, Theorem~\ref{theorem:FBSDE:defautable:solution} simplifies the procedure to solve the coupled nonlinear FBSDE \eqref{FBSDE:defautable:ATSM:forward:a}-\eqref{FBSDE:defautable:ATSM:backward:a}  to the solution of the Riccati equation \eqref {eq:Riccati:defautable} and the  linear BSDE \eqref{eq:FBSDE:defautable:numerical}.

\subsubsection{Defaultable Case with QTSM and No Jumps}
In the framework of QTSMs with no-jumps, the FBSDE \eqref{FBSDE:jumps:forward:a}-\eqref{FBSDE:jumps:backward:a} becomes
\begin{align}
&X_s=X_t+\int_t^s\big(AX_v+B\big)dv+\int_t^s \Sigma dW^{\mathbb{P}}_v \label{FBSDE:defautable:QTSM:forward}\\
&Y_s=-\ln C_T+\int_s^T(X_v^\prime QX_v+R^\prime
X_v+k-\frac{1}{2}Z^\prime_vZ_v)dv+\int_s^TZ_vdW^{\mathbb{P}}_v.\label{FBSDE:defautable:QTSM:backward}
\end{align}
As in the case of ATSMs we obtain the partially explicit solutions to the FBSDE
\eqref{FBSDE:defautable:QTSM:forward}-\eqref{FBSDE:defautable:QTSM:backward} stated in the following
theorem. We omit the proof since it is similar to the proof of Theorem~\ref{theorem:FBSDE:defautable:solution}.
\begin{theorem}\label{theorem4.6}
	If the Riccati equations
	\begin{align}
	&\dot{q}_s+q_sA+A^\prime q_s+\frac{1}{2}(q_s^\prime+q_s)\Sigma\Sigma^\prime(q_s^\prime+q_s)-Q=0_{n\times
		n}, \quad s\in[0,T] \label{FBSDE:defautable:QTSM:Riccati:q}\\
	&\dot{u}_s+u_sA+B^\prime(q_s^\prime+q_s)+u_s\Sigma\Sigma^\prime(q_s^\prime+q_s)-R^\prime=0_{1\times n},
	\quad s\in[0,T]\label{FBSDE:defautable:QTSM:Riccati:u}\\
	&q_T=0_{n\times n},\quad u_T=0_{1\times n} \label{FBSDE:defautable:QTSM:Riccati:boundary}
	\end{align}
	admit  unique bounded solutions $q(\cdot)$, $u(\cdot)$ over the interval $[0,T]$, then the FBSDE
	\eqref{FBSDE:defautable:QTSM:forward}-\eqref{FBSDE:defautable:QTSM:backward} admits a unique solution
	and $(Y,Z)$ has explicit expression in terms of $X$ as follows
	\begin{align}
	Y_s&= -(X_s^\prime q_sX_s+u_tX_s+p_s), \label{FBSDE:defautable:QTSM:rep:Y}\\
	Z_s&=\Big(X_s^\prime(q_s+q_s^\prime)+u_s\Big)\Sigma+z_s, \label{FBSDE:defautable:QTSM:rep:Z}
	\end{align}
	where $(p_s, z_s)$ solves the following BSDE
	\begin{equation}
	p_s=-\ln
	C_T-\int_s^T\left(k-u_vB-\frac{1}{2}\text{tr}\big((q_v+q_v^\prime)\Sigma\Sigma^\prime\big)-\frac{1}{2}u_v\Sigma\Sigma^\prime
	u_v^\prime+\frac{1}{2}z_v z_v^\prime\right) dv-\int_s^Tz_vdW^{\mathbb{P}}_v.  \label{FBSDE:defautable:QTSM:p}
	\end{equation} 
\end{theorem}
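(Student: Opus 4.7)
The plan is to mirror the structure of the proof of Theorem~\ref{theorem:FBSDE:defautable:solution} adapted to the quadratic setting. I would proceed in two stages: first establish existence and uniqueness of the decoupled FBSDE~\eqref{FBSDE:defautable:QTSM:forward}-\eqref{FBSDE:defautable:QTSM:backward} and of the auxiliary BSDE~\eqref{FBSDE:defautable:QTSM:p}, then verify that the proposed ansatz~\eqref{FBSDE:defautable:QTSM:rep:Y}-\eqref{FBSDE:defautable:QTSM:rep:Z} solves the BSDE.

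For existence and uniqueness, the forward SDE~\eqref{FBSDE:defautable:QTSM:forward} is an affine SDE with constant diffusion and so admits a unique strong Gaussian solution by standard theory. For the quadratic BSDE~\eqref{FBSDE:defautable:QTSM:backward}, the exponential change of variables $\tilde Y_s = e^{-Y_s}$, $\tilde Z_s = -\tilde Y_s Z_s$ linearises the generator and produces a linear BSDE with bounded terminal value $C_T \in [0,1]$ and driver with polynomial growth in $X$, for which classical well-posedness applies once the Gaussian moments of $X$ are used. The same transformation converts~\eqref{FBSDE:defautable:QTSM:p} into a linear BSDE for $(\tilde p,\tilde z) = (e^{-p}, e^{-p} z)$, yielding uniqueness of $(p,z)$.

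For the explicit representation, I would set $\phi(s,x,p) = -(x^\prime q_s x + u_s x + p)$ and apply It\^o's formula to $Y_s = \phi(s, X_s, p_s)$, using~\eqref{FBSDE:defautable:QTSM:forward} for $dX_s$ and~\eqref{FBSDE:defautable:QTSM:p} for $dp_s$. Collecting terms gives
\begin{align*}
dY_s &= -\Big(X_s^\prime \dot q_s X_s + \dot u_s X_s\Big)ds - \Big(X_s^\prime(q_s + q_s^\prime) + u_s\Big)(A X_s + B)\,ds \\
&\quad - \tfrac{1}{2}\mathrm{tr}\big((q_s+q_s^\prime)\Sigma\Sigma^\prime\big)ds - \Big(X_s^\prime(q_s+q_s^\prime) + u_s\Big)\Sigma\,dW_s^{\mathbb{P}} - dp_s.
\end{align*}
Inserting the Riccati equations~\eqref{FBSDE:defautable:QTSM:Riccati:q}-\eqref{FBSDE:defautable:QTSM:Riccati:u} eliminates the coefficients of $X_s^\prime(\cdot)X_s$ and $(\cdot)X_s$, while the dynamics of $p_s$ supply the constant and cross terms. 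The driver of the resulting BSDE should be exactly $-(X_s^\prime Q X_s + R^\prime X_s + k - \tfrac12 Z_s Z_s^\prime)$ with $Z_s$ as in~\eqref{FBSDE:defautable:QTSM:rep:Z}, and the terminal conditions $q_T = 0$, $u_T = 0$, $p_T = -\ln C_T$ give $Y_T = -\ln C_T$ to match~\eqref{FBSDE:defautable:QTSM:backward}.

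The main obstacle is the bookkeeping of the quadratic terms in $X_s$. Expanding
\[
\tfrac{1}{2}Z_s Z_s^\prime = \tfrac{1}{2}\Big((X_s^\prime(q_s+q_s^\prime) + u_s)\Sigma + z_s\Big)\Big((X_s^\prime(q_s+q_s^\prime) + u_s)\Sigma + z_s\Big)^\prime
\]
splits into a purely quadratic-in-$X$ piece that must cancel against $\tfrac{1}{2}(q_s^\prime + q_s)\Sigma\Sigma^\prime(q_s^\prime+q_s)$ from~\eqref{FBSDE:defautable:QTSM:Riccati:q}; a linear-in-$X$ piece absorbed via $u_s \Sigma\Sigma^\prime(q_s^\prime+q_s)$ in~\eqref{FBSDE:defautable:QTSM:Riccati:u}; a cross term $(X_s^\prime(q_s+q_s^\prime)+u_s)\Sigma z_s^\prime$ and a $\tfrac{1}{2}z_s z_s^\prime$ contribution that are both carried by the drift of~\eqref{FBSDE:defautable:QTSM:p}; and trace and $\tfrac{1}{2}u_s\Sigma\Sigma^\prime u_s^\prime$ constants that likewise match. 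Tracking these cancellations symbol-by-symbol (and the symmetrisation $q_s^\prime+q_s$ appearing throughout) is the only technically delicate step; once done, well-posedness of the transformed linear BSDE gives uniqueness of the representation.
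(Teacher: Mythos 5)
Your overall strategy (existence and uniqueness via the exponential change of variables, then an It\^o verification of the quadratic ansatz using the Riccati equations) is exactly the route the paper takes: the paper omits the proof of this theorem and defers to the proof of Theorem~\ref{theorem:FBSDE:defautable:solution}, whose structure you reproduce faithfully in the quadratic setting. The existence/uniqueness stage and the cancellation of the purely quadratic and purely linear terms in $X_s$ via \eqref{FBSDE:defautable:QTSM:Riccati:q}--\eqref{FBSDE:defautable:QTSM:Riccati:u} are fine.

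There is, however, a concrete gap at the crux of your verification. You assert that the cross term $\bigl(X_s^\prime(q_s+q_s^\prime)+u_s\bigr)\Sigma z_s^\prime$ arising from $\tfrac12 Z_sZ_s^\prime$ is ``carried by the drift of \eqref{FBSDE:defautable:QTSM:p}.'' It cannot be: the driver of \eqref{FBSDE:defautable:QTSM:p} does not depend on $X_v$, so it contains no term that could match a contribution linear in $X_s$ multiplied by $z_s$. If you carry out the bookkeeping you describe, using the $\mathbb{P}$-dynamics $dX_s=(AX_s+B)\,ds+\Sigma\,dW^{\mathbb{P}}_s$ and the stated $p$-BSDE, the drift of the ansatz comes out as $-\bigl(X_s^\prime QX_s+R^\prime X_s+k-\tfrac12\kappa_s\Sigma\Sigma^\prime\kappa_s^\prime+\tfrac12 z_sz_s^\prime\bigr)$ with $\kappa_s=X_s^\prime(q_s+q_s^\prime)+u_s$, which differs from the required driver $-\bigl(X_s^\prime QX_s+R^\prime X_s+k-\tfrac12 Z_sZ_s^\prime\bigr)$ by the residual $Z_sz_s^\prime=\kappa_s\Sigma z_s^\prime+z_sz_s^\prime$; this vanishes only when $z\equiv 0$. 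The cancellation closes only if one works with the coupled FBSDE under the optimal measure $\mathbb{Q}^\star$ (forward drift augmented by $\Sigma Z_v^\prime$ as in \eqref{BSDE:general:underQ:forward:b}--\eqref{BSDE:general:underQ:backward:b}, with the $p$-BSDE driven by $W^{\mathbb{Q}^\star}$), which is implicitly what the paper's ATSM proof does when it inserts $S\,\mathrm{diag}(\cdot)z_s^\prime$ into the drift of $dX_s$; your plan, stated for the decoupled $\mathbb{P}$-equations, will not reproduce it. A related slip: with the ansatz \eqref{FBSDE:defautable:QTSM:rep:Y} and $q_T=0$, $u_T=0$, the terminal condition $p_T=-\ln C_T$ from \eqref{FBSDE:defautable:QTSM:p} gives $Y_T=-p_T=+\ln C_T$, not $-\ln C_T$ as you claim; one needs $p_T=\ln C_T$ (equivalently $\tilde p_T=C_T$ rather than $1/C_T$) for the terminal values to match. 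Both points surface only when the ``symbol-by-symbol'' tracking you defer is actually performed, so as written the proposal does not yet constitute a proof.
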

By the same technique as in  the ATSM case we make the change of  variables 
$$	\tilde p_s=e^{-p_s}, \quad	\tilde z_s=\tilde p_s\cdot z_s,$$
so that the BSDE \eqref{FBSDE:defautable:QTSM:p} to obtain the linear BSDE
\begin{equation}
\tilde p_s=\frac{1}{C_T}+\int_s^T(k-u_vB-\frac{1}{2}\text{tr}\big((q_v+q_v^\prime)\Sigma\Sigma^\prime)\tilde
p_vdv+\int_s^T\tilde z_vdW^{\mathbb{P}}_v.
\end{equation}
The above BSDE is of the same form as BSDE \eqref{eq:FBSDE:defautable:numerical}, which can also be solved either analytically or numerically.
\begin{remark}
	The decoupled Riccati equations \eqref{FBSDE:defautable:QTSM:Riccati:q}-\eqref{FBSDE:defautable:QTSM:Riccati:boundary} are closely related to the LQ control problem. The existence and uniqueness of solutions to the Riccati equations \eqref{FBSDE:defautable:QTSM:Riccati:q}-\eqref{FBSDE:defautable:QTSM:Riccati:boundary} have been discussed in \citet{Zhou2010}.   We provide a similar proof in the appendix based on the results of \citet{gombani2012arbitrage}.
\end{remark}

\subsubsection{Non-Defaultable case with QTSMs and jumps}
In the framework of QTSMs with jumps, we make the following specifications
\begin{enumerate}[(i)]
	\item $f(s,x)=Ax+B$
	\item $g(s,x)=\Sigma$
	\item $r(x)=x^\prime Qx+R^\prime x+k$
	\item $\lambda(x)=x^\prime L_2x+L_1^\prime x+l$
\end{enumerate}
where $A$ is an $(n\times n)$-matrix of scalars, $B$, $R$ and $L_1$ are $(n\times 1)$-column vectors, $Q$, $\Sigma$ and $L_2$ are $n\times n$ symmetric positive semi-definite matrices, $k$ and $l$ are  scalars. Then the FBSDE \eqref{FBSDE:jumps:forward:a}-\eqref{FBSDE:jumps:backward:a} becomes
\begin{align}
&X_s=X_t+\int_t^s\big(AX_{v-}+B\big)dv+\int_t^s\Sigma dW^{\mathbb{P}}_v+\int_t^s\int_{\mathbf{R}^n}zN^{\mathbb{P}}(dv,dz) \label{FBSDE:jumps:QTSM:forward:a} \\
\notag &Y_s=\int_s^T(X_{v-}^\prime QX_{v-}+R^\prime X_{v-}+k+\frac{1}{2}Z_vZ^\prime_v)dv+\int_s^TZ_vdW^{\mathbb{P}}_v\\
&\quad \quad -\int_s^T\int_{\mathbf{R}^n}(X_{v-}^\prime L_2 X_{v-}+L_1^\prime X_{v-}+L_0)\big(e^{G(v,z)}-1\big)v(dz)dv+\int_s^T\int_{\mathbf{R}^n}G(v,z)N^{\mathbb{P}}(dv,dz). \label{FBSDE:jumps:QTSM:backward:a}
\end{align}
Similar to the result in ATSMs with jumps we obtain the following explicit solution of the FBSDE~(\ref{FBSDE:jumps:QTSM:forward:a})-(\ref{FBSDE:jumps:QTSM:backward:a}).
\begin{theorem}\label{th:JQTSM}
	If the Riccati equation
	\begin{align}
	&\dot{q}_s+q_sA+A^\prime q_s+\frac{1}{2}(q_s^\prime+q_s)\Sigma\Sigma^\prime(q_s^\prime+q_s)
	+\Big[\int_{\mathbf{R}^n}(e^{z^\prime q_sz+u_sz}-1)v(dz)\Big]L_2^\prime-Q=0_{n\times n},  \\
	&\dot{u}_s+u_sA+B^\prime(q_s^\prime+q_s)+u_s\Sigma\Sigma^\prime(q_s^\prime+q_s)
	+\Big[\int_{\mathbf{R}^n}(e^{z^\prime q_sz+u_sz}-1)v(dz)\Big]L_1^\prime-R^\prime=0_{1\times n},\\
	&q_T=0,\quad u_T=0
	\end{align}
	admits  unique bounded solutions $q(\cdot)$, $u(\cdot)$ over the interval $[0,T]$, then the FBSDE
	\eqref{FBSDE:jumps:QTSM:forward:a}-\eqref{FBSDE:jumps:QTSM:backward:a} admits a unique solution
	and $(Y,Z,G)$ has explicit expression in terms of $X$ as follows 
	\begin{align*}
	&Y_s= -(X_s^\prime q_sX_s+u_sX_s+p_s), \\
	&Z_s=\Big(X_{t-}^\prime(q_s+q_s^\prime)+u_s\Big)\Sigma, \quad \text{and} \\
	&G(s,z)=z^\prime q_sz+u_sz,
	\end{align*}
	where $p_s$ is given by
	\begin{equation*}
	p_s=-\int_s^T\left(k-L_0\Big[\int_{\mathbf{R}^n}(e^{z^\prime
		q_vz+u_vz}-1)v(dz)\Big]-u_vB-\frac{1}{2}\text{tr}\big((q_v+q_v^\prime)\Sigma\Sigma^\prime\big)-\frac{1}{2}u_v\Sigma\Sigma^\prime
	u_v^\prime\right)dv.
	\end{equation*}
\end{theorem}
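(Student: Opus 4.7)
The plan is to mirror the proof of Theorem~\ref{theorem:BSDE:jump:solution} in the quadratic setting: first establish well-posedness of the FBSDE, then verify by It\^o's formula that the quadratic ansatz $Y_s=-(X_s^\prime q_s X_s+u_sX_s+p_s)$, together with the induced Brownian and jump coefficients, satisfies the BSDE \eqref{FBSDE:jumps:QTSM:backward:a}, with the coefficients pinned down by the stated Riccati/ODE system.

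For existence and uniqueness of a solution to the FBSDE, note that the forward equation \eqref{FBSDE:jumps:QTSM:forward:a} is a linear jump-diffusion with constant volatility, so under standard integrability assumptions on $v(dz)\lambda(X_{s-})\,ds$ it admits a unique strong solution. Given $X$, the BSDE \eqref{FBSDE:jumps:QTSM:backward:a} has quadratic growth in $Z$, so I would linearize it via the exponential substitution $\tilde Y_s=e^{-Y_s}$, $\tilde Z_s=-\tilde Y_s Z_s$, $\tilde G(s,z)=-\tilde Y_{s-}(1-e^{G(s,z)})$, exactly as in the ATSM proof. This converts it into a jump BSDE whose driver is linear in $(\tilde Y,\tilde Z,\tilde G)$ with coefficients involving the quadratic functions $r$ and $\lambda$ of $X$; applying \citet[Theorem 3.1.1]{delong2013backward} then yields a unique solution, from which $(Y,Z,G)$ is recovered.

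For the explicit representation, let $\phi(s,x)=-(x^\prime q_s x+u_s x+p_s)$ and set $Y_s=\phi(s,X_s)$. Applying It\^o's formula for jump-diffusions, the Brownian martingale integrand is $(X_{v-}^\prime(q_v+q_v^\prime)+u_v)\Sigma$, which identifies $Z_s$; the integrand against the compensated measure $\tilde N^{\mathbb P}(dv,dz)$ is $-(z^\prime q_v z+u_v z)$, which (after matching signs with \eqref{FBSDE:jumps:QTSM:backward:a}) identifies $G(s,z)=z^\prime q_s z+u_s z$. The drift from It\^o combines $\dot q_s,\dot u_s,\dot p_s$, the contributions of the linear drift $AX_{v-}+B$, the Hessian term $-\operatorname{tr}(q_v\Sigma\Sigma^\prime)$, and the compensator piece $\int_{\mathbf R^n}(e^{G(v,z)}-1)v(dz)\cdot\lambda(X_{v-})$ which arises when rearranging $N^{\mathbb P}$ against $\tilde N^{\mathbb P}$ and absorbing the $\tfrac12 ZZ^\prime$ term. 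Matching this drift against the driver of \eqref{FBSDE:jumps:QTSM:backward:a} and separating the coefficients of $X^\prime(\cdot)X$, of $X_{v-}$, and of the constants gives respectively the matrix Riccati equation for $q_s$, the linear ODE for $u_s$ (coupled through $q$), and the integral formula for $p_s$. The terminal conditions $q_T=u_T=p_T=0$ yield $Y_T=0$, matching the BSDE's terminal value.

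The main obstacle is bookkeeping of the jump compensator. Because $\lambda$ is itself quadratic in $X$, the compensator term $(X_{v-}^\prime L_2 X_{v-}+L_1^\prime X_{v-}+l)\int_{\mathbf R^n}(e^{z^\prime q_v z+u_v z}-1)v(dz)$ contributes simultaneously to the quadratic, linear, and constant layers in $X$; this is precisely what forces the $L_2^\prime$ factor into the Riccati equation for $q$, the $L_1^\prime$ factor into the ODE for $u$, and the $l$ factor into the formula for $p$. Verifying that this nonlocal nonlinearity is tracked consistently in all three layers, and that the stated Riccati system admits a bounded solution (a hypothesis of the theorem), is the delicate part; the rest is a direct quadratic-ansatz generalization of the ATSM computation in Theorem~\ref{theorem:BSDE:jump:solution}.
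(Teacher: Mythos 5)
Your overall strategy is exactly the one the paper intends: the paper omits the proof of this theorem, stating only that it is ``similar to the proof of Theorem~\ref{theorem:BSDE:jump:solution}'', and your plan (linearize the quadratic BSDE by the exponential substitution to get well-posedness, then verify the quadratic ansatz by It\^o's formula and match coefficients layer by layer) is precisely that route. However, there is a concrete gap at the one point where ``similar to the affine case'' is not actually routine. You assert that the integrand against $\tilde N^{\mathbb P}(dv,dz)$ is $-(z^\prime q_v z+u_v z)$. That is not what It\^o's formula gives for a quadratic ansatz: at a jump of size $z$,
\begin{equation*}
\phi(v,X_{v-}+z)-\phi(v,X_{v-})=-\bigl(z^\prime q_v z+X_{v-}^\prime(q_v+q_v^\prime)z+u_v z\bigr),
\end{equation*}
so the jump integrand carries the cross term $X_{v-}^\prime(q_v+q_v^\prime)z$. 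In the affine case the ansatz is linear in $x$ and no such cross term appears, which is why the verification there closes cleanly; in the quadratic case it does not disappear.

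This is not merely a bookkeeping slip. If $G$ is corrected to include the cross term, then the compensator contribution $\int_{\mathbf R^n}(e^{G(v,z)}-1)v(dz)$ has $X_{v-}$ \emph{inside the exponential}, and multiplying by the quadratic intensity $\lambda(X_{v-})$ no longer separates into quadratic, linear, and constant layers in $X$ --- so the coefficient-matching step that is supposed to produce the stated Riccati system for $(q,u)$ and the integral formula for $p$ does not go through as written. To salvage the verification you must either (i) keep $G(s,z)=z^\prime q_s z+ X_{s-}^\prime(q_s+q_s^\prime)z+u_s z$ and rederive the (now $X$-dependent, nonseparable) system, or (ii) identify structural assumptions under which the cross term is absent. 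Since the paper states the same $G$ without proof, the issue is arguably inherited from the statement itself, but a proof attempt has to confront it rather than assert the jump integrand; as it stands, the It\^o verification step in your proposal would fail.
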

We omit the proof of Theorem~\ref{th:JQTSM} as it is similar to the proof of Theorem~\ref{theorem:BSDE:jump:solution}.

We now consider the EMT problems for futures and forward prices.

\subsection{Futures and forward prices}\label{futures and forward}
Suppose the factor process $X_s$ given by equation (\ref{eq:SDE:forward:a}) drives not only the short rate but also a risky asset price.  We assume that the risky asset price is a function of factors, $S_s=S(s,X_s)$, for some function $S(\cdot,\cdot): [0,\infty)\times \mathbf{R}^n\to (0,\infty)$. For instance,  $S(\cdot,\cdot)$ can be specified by
$$S(s,x)=e^{A_s^\prime x+h_s},$$
which we refer to as an affine price model (APM), or
$$S(s,x)=e^{x^\prime B_s x+A_s^\prime x+h_s}$$
which we refer to as a quadratic price model (QPM), where $B_s: [0,T]\to \mathbf{R}^{n\times n}, A_s:[0,T]\to \mathbf{R}^n, h_s: [0,T]\to \mathbf{R}$. 

We next consider futures and forward contract on the risky asset $S$ and associate the futures prices and forward prices with EMT problems.

\subsection{Futures prices}
The futures price of the risky asset $S$ is given by
\begin{equation}\label{eq:future:price:a}
G(t,T)=\mathbb{E}_{\mathbb{P}}[S(T,X_T)|\mathcal{F}_t], 
\end{equation}
at time $t$ for maturity $T$, %
and let
\begin{equation}\label{problem:EMT:future:a}
\left\{
\begin{aligned}
dX_s&=f(s,X_s)ds+g(s,X_s)dW_s^{\mathbb{P}}\\
V^G_{t,T}&=\inf_{\mathbb{Q}^G\in \mathcal{P}_t(1)}\big\{\mathbb{E}_{\mathbb{Q}^G}[-\ln S(T,X_T)|\mathcal{F}_t]+H_{t,T}(\mathbb{Q}^G|\mathbb{P})\big\}.
\end{aligned}
\right.
\end{equation}

By Proposition \ref{proposition:duality:relation:a} the solution of the EMT Problem~(\ref{problem:EMT:future:a}) is given by the optimal measure $\mathbb{Q}^{G^\star}$,  that is determined by
\begin{align}\label{eq:EMT:future:optimal:measure:a}
\left.\frac{d \mathbb{Q}^{G^\star}}{d \mathbb{P}}\right|_{\mathcal{F}_T}&=\frac{S(T,X_T)}{\mathbb{E}_{\mathbb{P}}[S(T,X_T)|\mathcal{F}_t]},\\
\label{eq:EMT:future:valuefunction:a}
V_{t,T}^G &=-\ln \{\mathbb{E}_{\mathbb{P}}[S(T,X_T)|\mathcal{F}_t]\}. 
\end{align}
Equation (\ref{eq:EMT:future:valuefunction:a}) connects the EMT Problem~(\ref{problem:EMT:future:a}) with the futures price as
$$
V^G_{t,T}=-\ln G(t,T).
$$
This relationship allows us to give the following financial interpretation.  

\begin{remark}
	Suppose a financial agent holds a long position in a futures contract on the risky asset $S$ at time t with a futures price $c$. If the risky asset has a price $S(T,X_T)$ at the expiration time $T$, by marking to market through the time period $[t,T]$, the logarithmic return on the investment is
	$$\gamma=\ln \frac{S(T,X_T)}{c}.$$
	Then the EMT Problem~(\ref{problem:EMT:future:a})  is equivalent to 
	\begin{equation}\label{eq:EMT:future:interpretation:a}
	\ln\frac{G(t,T)}{c}=\sup_{\mathbb{Q}\in \mathcal{P}_t(1)}\big\{\mathbb{E}_{\mathbb{Q}^G}[\gamma|\mathcal{F}_t]-H_{t,T}(\mathbb{Q}^G|\mathbb{P})\big\}.
	\end{equation}
	The right-hand side  of equation (\ref{eq:EMT:future:interpretation:a}) maximizes the logarithmic return $\gamma$ under  $\mathbb{Q}^{G^{*}}$ with an entropy penalty term for removing the market risk the futures contract caused by the volatility risk of underlying risky asset.
\end{remark}
By Theorem~\ref{thrm_main}, the measure $\mathbb{P}^{\star}$ is characterized by the following decoupled FBSDE
\begin{align}
X_s&=X_t+\int_t^s f(v,X_v)dv+\int_t^s g(v,X_v)dW_v^\mathbb{P}, \label{FBSDE:future:forward:a}\\
Y_s&=-\ln [S(T,X_T)]-\int_s^T\frac{1}{2}Z_vZ_v^\prime dv+\int_t^TZ_vdW_v^\mathbb{P}.\label{FBSDE:future:backward:a}\\
	 Y_s &\triangleq \ln \left\{\mathbb{E}_{\mathbb{P}}[e^{-\int_{t}^{T}r(X_u)du}\cdot C_T |\mathcal{F}_s]\right\}.\nonumber
\end{align}
If the above FBSDE admits a solution triple $(X,Y,Z)$, then the value function and the measure $\mathbb{P}^{\star}$ is defined by
\begin{align*}
V^G_{t,T}&=Y_t,\\
\left.\frac{d \mathbb{Q}^{G^\star}}{d \mathbb{P}}\right|_{\mathcal{F}_T}&=e^{-\int_t^T\frac{1}{2}Z_vZ_v^\prime dv+\int_t^T Z_vdW^\mathbb{P}_v}.
\end{align*}
\citet{hyndman2009forward} and \citet{Zhou2010} studied the 
the FBSDE (\ref{FBSDE:future:forward:a})-(\ref{FBSDE:future:backward:a}) in the framework of ATSMs and QTSMs, respectively, and gave explicit solutions.

We next consider a forward contract on the risky asset. 
\subsection{Forward prices}
The forward price of the risky asset $S$ is given by
\begin{equation}\label{eq:forward:price}
F(t,T)=\frac{\mathbb{E}_{\mathbb{P}}[e^{-\int_t^Tr(X_v)dv}S(T,X_T)|\mathcal{F}_t]}{P(t,T)}, 
\end{equation}
at time $t$ for maturity $T$. To ensure that the forward price is not simply equal to the futures price we assume that the interest rate process is stochastic and the factors influencing the interest rate are not independent of the factors influencing the underlying asset price.  Further, to preclude the case where the numerator of equation~(\ref{eq:forward:price}) reduces to the underlying asset price at time $t$ we suppose that the asset pays a stochastic dividend or convenience yield.  

Similar to the derivation of  the EMT Problem in Section~\ref{s_EMT_if} we let $\varphi=(\ln S(T,X_T)-\int_{t}^{T}r_vdv)$ and associate the forward price with the following EMT problem
\begin{equation}\label{problem:EMT:forward}
\left\{
\begin{aligned}
dX_s&=f(s,X_s)ds+g(s,X_s)dW_s^{\mathbb{P}}\\
V^F_{t,T}&=\inf_{\mathbb{Q}^F\in \mathcal{P}_t(1)}\big\{\mathbb{E}_{\mathbb{Q}^F}\big[-\ln S(T,X_T)+\int_t^Tr_vdv|\mathcal{F}_t\big]+H_{t,T}(\mathbb{Q}^F|\mathbb{P})\big\}.
\end{aligned}
\right.
\end{equation}
By Proposition \ref{proposition:duality:relation:a} the solution to the EMT Problem~(\ref{problem:EMT:forward}) is given by the optimal measure $\mathbb{Q}^{F^\star}$, that is determined by
\begin{equation}
\left.\frac{d \mathbb{Q}^{F^\star}}{d \mathbb{P}}\right|_{\mathcal{F}_T}=\frac{S(T,X_T)e^{-\int_t^Tr_vdv}}{\mathbb{E}_{\mathbb{P}}[S(T,X_T)e^{-\int_t^Tr_vdv}|\mathcal{F}_t]},
\end{equation}
and the optimal value function  given by
\begin{equation}\label{eq:EMT:forward:valuefuncation}
V_{t,T}^F =-\ln\big(\mathbb{E}_{\mathbb{P}}[e^{-\int_t^Tr(X_v)dv}S(T,X_T)|\mathcal{F}_t]\big).
\end{equation}
Equation (\ref{eq:EMT:forward:valuefuncation}) connects the EMT Problem~(\ref{problem:EMT:forward}) with the forward price as
$$V^F_{t,T}=-\ln\big(F(t,T) P(t,T)\big).$$
Therefore, we have the following financial interpretation of the EMT Problem~(\ref{problem:EMT:forward}).

\begin{remark}
	Suppose a financial agent enters into a forward agreement to receive the asset at time $T$ but pays $c$ at time $t$. At the expiration time $T$ the agent receives $S(T,X_T)$ and the  logarithmic return on the investment over the time period $[t,T]$ is
	$$\gamma=\ln \frac{S(T,X_T)}{c}.$$
	The excess return over the risk-free rate, $\tilde{\gamma}$, is given by
	$$\tilde\gamma=\gamma-\int_{t}^{T}r(X_v)dv.$$
	Then the EMT Problem~(\ref{problem:EMT:forward})  is equivalent to 
	\begin{equation}\label{eq:EMT:forward:interpretation}
	\ln\frac{F(t,T)P(t,T)}{c}=\sup_{\mathbb{Q}^F\in
		\mathcal{P}_t(1)}\big\{\mathbb{E}_{\mathbb{Q}^F}[\tilde\gamma|\mathcal{F}_t]-H_{t,T}(\mathbb{Q}^F|\mathbb{P})\big\}. 
	\end{equation}
	Similar to the financial interpretation of the EMT Problem for the bond and futures contract the right hand side of equation (\ref{eq:EMT:forward:interpretation}) maximizes the excess  return $\tilde \gamma$ under  $\mathbb{Q}^{F^{*}}$ with an entropy penalty term for removing the market risk of the value of the forward commitment due to the volatility risk of the factor process that determines both the interest rate and underlying asset volatilities.
\end{remark}
Using Theorem~\ref{thrm_main}, we characterize the EMT Problem \ref{problem:EMT:future:a}  by the FBSDE
\begin{align}
X_s&=X_t+\int_t^s f(v,X_v)dv+\int_t^s g(v,X_v)dW_v^\mathbb{P} \label{FBSDE:forward:forward}\\
Y_s&=-\ln [S(T,X_T)]+\int_s^T[r(X_v)-\frac{1}{2}Z_vZ_v^\prime] dv+\int_s^TZ_vdW_v^\mathbb{P} \label{FBSDE:forward:backward}
\end{align}
If the above FBSDE admits a solution triple $(X,Y,Z)$, then the value function and the optimal measure to the
EMT may be expressed as
\begin{align*}
V^F_{t,T}&=Y_t,\\
\left.\frac{d \mathbb{Q}^{F^\star}}{d \mathbb{P}}\right|_{\mathcal{F}_T}&=e^{-\int_t^T\frac{1}{2}Z_vZ_v^\prime dv+\int_t^T Z_vdW^\mathbb{P}_v}.
\end{align*}
\citet{hyndman2009forward} and  \citet{Zhou2010} also studied the 
the FBSDE (\ref{FBSDE:forward:forward})-(\ref{FBSDE:forward:backward}) in the framework of ATSMs and QTSMs, respectively, and gave explicit solutions.

The EMT approach seems to be more flexible with respect to the dynamics of the factors process than the OSC approach. In next section we extend the EMT approach to include jumps in the factors which would be difficult to incorporate using the OSC approach.

The next section compares the entropic measure transform problem with the optimal stochastic control 
problem proposed by \citet{gombani2012arbitrage}. It is found that there exists an equivalence between these two approaches, within the scope of the term-structure of interest.  

\subsection{Equivalence between the EMT problem and the OSC problem in Bond Pricing}
Following \citet{gombani2012arbitrage}, considered the bond pricing problem under the same general framework as we set up in previous section. To avoid confusion, we denote the factor process by $\tilde X_s$ in the context of  \citet{gombani2012arbitrage}. Additionally, $\tilde X_s$ is assumed to solve the SDE
\begin{equation}\label{eq:OSC:SDE}
d\tilde X_s=\big[f(s,\tilde X_s)+g(s,\tilde X_s)u_s^\prime\big]ds+ g(s,\tilde X_s)dW^{\mathbb{P}}_s; X_t=x.
\end{equation}
so that the price of default-free bond, denoted by $P(t,T,x)$, at time $t$ is given by
$$P(t,T,x)=\mathbb{E}_{\mathbb{P}}[e^{-\int_{t}^{T}r_vdv} |\mathcal{F}_t]=\mathbb{E}_{\mathbb{P}}[e^{-\int_{t}^{T}r_vdv} |X_t=x].$$
Assuming $P(t,T,x)\in \mathcal{C}^{1,2}$, a sufficient condition for the term-structure induced by $P(t,T,x)$ to be arbitrage-free is that $P(t,T,x)$ satisfies the following partial differential equation (see \citet[Proposition 21.2]{bjork2004arbitrage})
\begin{equation}
\left\{
\begin{aligned}
& \frac{\partial}{\partial t}P(t,T,x)+f^\prime(t,x)\nabla_xP(t,T,x)+\frac{1}{2}\text{tr}\big(g^\prime(t,x)\nabla_{xx}P(t,T,x)
g(t,x)\big)-P(t,T,x)r(t,x)=0\\
& P(T,T,x)=1.
\end{aligned}
\right. \label{eq:term:structure:PDE:a}
\end{equation}

\citet{gombani2012arbitrage} transform equation~(\ref{eq:term:structure:PDE:a}) to an equivalent Hamilton-Jacobi-Bellman equation which corresponds to the following optimal stochastic control (OSC) problem.
\begin{problem} \label{problem:OSC:a}
	On a filtered probability space $(\Omega,\mathcal{F},\{\mathcal{F}_s,0\leq s\leq T\},\mathbb{P})$, with a Markovian process $\tilde X_s$ given by equation~\eqref{eq:EMT:SDE}.  
	
	Let $\mathcal{U}$ be the admissible control set, then for any control $u\in\mathcal{U}$ and
	$t\in[0,T]$, consider a performance criterion 
	$\tilde{J}_{t,T}(u)$ of the form
	\begin{equation}
	\tilde{J}_{t,T}(u)=\mathbb{E}_{\mathbb{P}}^{t,x}\left[\int_t^T\big(\frac{1}{2}u_vu_v^\prime+r(\tilde X_v)\big)dv\right],
	\end{equation}
	where $\mathbb{E}_{\mathbb{P}}^{t,x}$ denotes the conditional expectation given $\tilde X_t=x$.  
	The optimal control problem is  
	$$W_{t,T}=\inf_{u\in\mathcal{U}}\tilde{J}_{t,T}(u).$$
\end{problem}
\citet{gombani2012arbitrage}  established a connection between the price of default-free bonds and
the OSC Problem~\ref{problem:OSC:a} by showing that
\begin{equation*}
P(t,T,x)=e^{-W_{t,T}(x)}.
\end{equation*}

We next explore an equivalence relationship between the EMT problem and the OSC problem.  For any
$\mathbb{Q}\in \mathcal{P}_t(1)$,
the Radon-Nikodym derivative process is of the following form 
\begin{equation*}
\left.\frac{d \mathbb{P}^{\star}}{d \mathbb{P}}\right|_{\mathcal{F}_s}=
\begin{cases}
1, & \quad 0\leq s\leq t.\\
\Lambda_s, & \quad t<s\leq T.
\end{cases}
\end{equation*}
where $\Lambda_s$ is an $(\mathcal{F},\mathbb{P})$-martingale from $t$ to $T$. Since $\Lambda_s$ is positive almost surely,   by the martingale representation theorem,
there exists an $\mathcal{F}$-predictable $(1\times n)$-vector process  $u$ such that 
\begin{equation}
\left.\frac{d \mathbb{P}^{\star}}{d \mathbb{P}}\right|_{\mathcal{F}_s}=e^{-\int_t^s\frac{1}{2}u_vu_v^\prime
	dv+\int_t^s u_vdW^\mathbb{P}_v},  \quad t< s\leq T
\label{eq:denstiy:process:a}
\end{equation}
where $u$ is an $\mathcal{F}$-predictable $(1\times n)$-vector process. In the remaining part of this
section we denote by $\mathbb{Q}^u$ the probability measure associated with the density process in equation
(\ref{eq:denstiy:process:a}).
Then, by Girsanov's theorem, the process $W^{\mathbb{Q}^u}$ defined as
$$W^{\mathbb{Q}^u}_s=W^\mathbb{P}_s-\int_t^su^\prime_vdv,\quad t< s\leq T$$
is a Brownian motion under $\mathbb{Q}^u$.
Then we calculate the relative entropy of $\mathbb{Q}^u$
with respect to $\mathbb{P}$ explicitly in terms of $u$ as follows
\begin{align}
\notag
H_{t,T}(\mathbb{Q}^u|\mathbb{P})&=\mathbb{E}_{\mathbb{Q}^u}[\ln(\frac{d\mathbb{Q}^u}{d\mathbb{P}})|\mathcal{F}_t]\\
\notag &=\mathbb{E}_{\mathbb{Q}^u}[\Big(-\int_t^T\frac{1}{2}u_vu_v^\prime dv+\int_t^Tu_vdW_v^\mathbb{P}\Big)|\mathcal{F}_t]\\
\notag &=\mathbb{E}_{\mathbb{Q}^u}[\Big(\int_t^T\frac{1}{2}u_vu_v^\prime
dv+\int_t^TZ_vdW^{\mathbb{Q}^u}_v\Big)|\mathcal{F}_t]\\
&=\mathbb{E}_{\mathbb{Q}^u}[\int_t^T\frac{1}{2}u_vu_v^\prime dv|\mathcal{F}_t]. \label{eq:entropy:explicit:b}
\end{align}

Substituting the explicit expression of the relative entropy in equation (\ref{eq:entropy:explicit:b}) into 
\begin{equation}\label{equation:EMT:performance:a}
J_{t,T}(\mathbb{Q})=\mathbb{E}_{\mathbb{Q}}\Big[\int_{t}^{T}r(X_v)dv
\Big|\mathcal{F}_t\Big]+H_{t,T}(\mathbb{P}^{\star}|\mathbb{P}),
\end{equation} 
we restate the EMT Problem as follows
\begin{problem}\label{problem:EMT:b}
	On a filtered probability space $(\Omega,\mathcal{F},\{\mathcal{F}_s,0\leq s\leq T\}, \mathbb{P})$ suppose that  the factor process $(X_s, 0\leq s\leq T)$ is given by
	\begin{equation}
	dX_s=f(s,X_s)ds+g(s,X_s)dW^\mathbb{P}_s.
	\label{eq:EMT:SDE}
	\end{equation}
	Find the optimal measure $\mathbb{Q}^\star\in \mathcal{P}_t(1)$ such that
	\begin{equation}
	V_{t,T}=J_{t,T}(\mathbb{Q}^\star)=\inf_{\mathbb{Q}^u\in
		\mathcal{P}_t(1)}\mathbb{E}_{\mathbb{Q}^u}\Big[\int_{t}^{T}\big(r(X_v)+\frac{1}{2}u_vu_v^\prime\big) dv\Big].
	\end{equation} 
\end{problem}

In the OSC Problem \ref{problem:OSC:a}, the distribution of $\tilde X_s$ is changed by the control process $u$. In
the EMT Problem \ref{problem:EMT:b}, the distribution of $X_s$ is subject to the measure transformation from
$\mathbb{P}$ to $\mathbb{Q}^u$. Note that $\tilde X_s$ in equation (\ref{eq:OSC:SDE}) and $X_s$ in equation
(\ref{eq:EMT:SDE}) follow SDEs of the same form under different measures, in other words, the $u$ controlled
process $\tilde X_s$ has the same distribution under $\mathbb{P}$ as the process $X_s$ does under
$\mathbb{Q}^u$. Hence for each admissible control $u$ in the OSC problem with performance functional
$\tilde{J}_{t,T}(u)$, there exists a corresponding measure $\mathbb{Q}^u$ in the EMT problem with performance
functional $J_{t,T}(\mathbb{Q}^u)$, and $\tilde{J}_{t,T}(u)=J_{t,T}(\mathbb{Q}^u)$. So the optimal control $u^\star$
also corresponds to the entropic measure transform $\mathbb{Q}^\star=\mathbb{Q}^{u^\star}$. In that sense, the OSC problem is equivalent to the EMT problem. 

\begin{example}
	Now we compare the EMT problem and the OSC problem  under the framework of QTSMs with specifications
	\begin{enumerate}[(i)]
		\item $f(s,x)=Ax+B$
		\item $g(s,x)=\Sigma$
		\item $r(x)=x^\prime Qx+R^\prime x+k$
	\end{enumerate}
	where $A$ is an $(n\times n)$-matrix of scalars, $B$ and $R$ are $(n\times 1)$-column vectors, $Q$ and
	$\Sigma$ are $n\times n$ symmetric positive semi-definite matrices, $k$ is a scalar. The OSC
	Problem \ref{problem:OSC:a}  becomes
	\begin{equation}	\label{problem:OSC:QTSM:b}	
	\left\{\begin{aligned}
	d\tilde X_s&=\big(A\tilde X_s+B+\Sigma u^\prime_s\big)ds+ \Sigma dW^{\mathbb{P}}_s,\\
	V_{t,T}&=\inf_{u\in\mathcal{U}}\tilde{J}_{t,T}(u)=\inf_{u\in\mathcal{U}}\mathbb{E}_{t,x}[\int_t^T\big(\tilde X_v^\prime Q\tilde X_v+R^\prime \tilde X_v+k+\frac{1}{2} u_v u_v^\prime\big)dv].
	\end{aligned}
	\right.
	\end{equation}

	The OSC Problem	\ref{problem:OSC:QTSM:b} is actually a linear-quadratic-Gaussian (LQG) control problem, whose optimal control $u^\star_s$ is of feedback form  (see \citet[Proposition 3.4]{gombani2012arbitrage}) 
	\begin{equation}\label{eq:OSC:QTSM:opt:control}
	u^\star_s=u^\star(s,\tilde X_s)=\Big(X_s^\prime(q_s+q_s^\prime)+v_s\Big)\Sigma, \quad t\leq s\leq T 
	\end{equation}
	with the value function $W_{t,T}(x)$ given by
	\begin{equation}\label{eq:OSC:QTSM:valuefunction}
	W_{t,T}(x)=x^\prime q_tx+ v_tx+p_t, 
	\end{equation}
	where $q_s, v_s, p_s$ satisfy the following ODE system
	\begin{equation}\label{ODEs:OSC}
	\begin{cases}
	\dot{q}_s+A^\prime q_s+q_sA-2q_s\Sigma \Sigma^\prime q_s+Q=0\\
	\dot{v}_s+v_sA+2B^\prime q_s^\prime-2v_s\Sigma \Sigma^\prime q^\prime_s+R=0\\
	\dot{p}_s+v_sB+\text{tr}(\Sigma^\prime q_s\Sigma)-\frac{1}{2}v\Sigma\Sigma^\prime v^\prime_s+k=0\\
	q_T=0,\quad v_T=0,\quad p_T=0.
	\end{cases} 
	\end{equation}
	
	Under the framework of QTSMs the EMT Problem is specified as
	\begin{equation}\label{problem:EMT:c}
	\left\{\begin{aligned}
	dX_s&=\big(AX_s+B\big)ds+\Sigma dW_s^\mathbb{P},\\
	V_{t,T}&=\inf_{\mathbb{Q}\in\mathcal{P}_t(1)}\mathbb{E}_{\mathbb{Q}}[\int_t^T\big(X_v^\prime QX_v+R^\prime X_v+k\big)dv|\mathcal{F}_t]+H_{t,T}(\mathbb{P}^{\star}|\mathbb{P}).
	\end{aligned}
	\right.  
	\end{equation}
	From Corollary~\ref{cor_bond_pricing}, we know  the EMT Problem~(\ref{problem:EMT:c}) is completely characterized via the related FBSDE
	\begin{align}
	X_s&=X_t+\int_t^s \left(AX_v+B+\Sigma Z_v^\prime\right)dv+\int_t^s\Sigma dW_v^\mathbb{P} 	\label{FBSDE:QTSM:forward:a}\\
	Y_s&=\int_s^T(X_v^\prime QX_v+R^\prime X_v+k-\frac{1}{2}Z^\prime_vZ_v)dv+\int_s^TZ_vdW_v^\mathbb{P}.
	\label{FBSDE:QTSM:backward:a}  
	\end{align}
	The value function is given by
	\begin{equation}
	V_{t,T}=Y_t,
	\end{equation}
	and the entropic measure transform is determined by
	\begin{equation}
	\left.\frac{d \mathbb{Q}^\star}{d \mathbb{P}}\right|_{\mathcal{F}_T}=e^{-\int_t^T\frac{1}{2}Z_vZ_v^\prime dv+\int_t^T Z_vdW^\mathbb{P}_v}.
	\end{equation}

	\citet{Zhou2010} proved that the FBSDE (\ref{FBSDE:QTSM:forward:a})-(\ref{FBSDE:QTSM:backward:a}) admits a unique solution $(X,Y,Z)$, and $(Y, Z)$ has  explicit expressions in terms of $X$ 
	\begin{align*}
	Y_s&= X_s^\prime q_sX_s+v_sX_s+p_s,\\
	Z_s&=\Big(X_s^\prime(q_s+q_s^\prime)+v_s\Big)\Sigma,
	\end{align*}
	where $q_s, v_s, p_s$ satisfy the same ODE system (\ref{ODEs:OSC}). Not surprisingly, the Girsanov kernel $Z_s$ for the transition from  
	$\mathbb{P}$ to $\mathbb{Q}^\star$ is the same as  the optimal control $u^\star$, i.e. $Z_s=u^\star_s$,
	and they give the same value function $V_{t,T}=W_{t,T}$.
\end{example}

An example of the numerical implementation of the EMT method is considered in the setting of defaultable bonds.  

\section{Numerical Illustration}\label{s_Numerical}
We consider a one dimensional factor process $X$ satisfying 
\begin{equation*}
d X_t = (aX_t +b) dt + \sigma \sqrt{\alpha +\beta X_v}dW^{\mathbb{P}}_t.
\end{equation*}
The interest rate is given by
\begin{equation*}
r(X_t) = RX_t+k.
\end{equation*}
We suppose the underlying company value $V$ satisfies 
\begin{equation*}
V_t = V_0 \exp\{ \int_0^t (r(X_v)  - \frac{1}{2}\sigma_V^2)dv + \sigma_V W^{\mathbb{P}}_t \}.
\end{equation*}
Default is triggered if the value process $V$ crosses below a certain level $\kappa V_0$, i.e.
\begin{equation}
\tau:=\inf\{t\geq 0, V_{t}\leq \kappa V_{0} \}.
\end{equation} 
Then the random payoff $C_T$ is given by 
$$C_T= \xi \cdot 1_{\tau \leq T}+ 1_{\tau > T}$$
where  $\xi$ is the recovery rate in case of default.

The price of the defaultable bond is given by 
\begin{equation*}
D(t,T)=\mathbb{E}_{\mathbb{P}}[e^{-\int_{t}^{T}(RX_v+k)dv}\cdot C_T |\mathcal{F}_t].
\end{equation*}
The solution to the associated EMT problem is characterized by the FBSDE
\begin{align}
&X_t=X_0+\int_0^t \left(aX_v+b\right)dv+\int_0^t\sigma\sqrt{\alpha+\beta X_v}dW^{\mathbb{P}}_v \label{FBSDE:defautable:ATSM:forward:aa}\\
&Y_t=-\ln C_T+\int_t^T(R X_v+k-\frac{1}{2}Z_v^2)dv+\int_t^TZ_vdW^{\mathbb{P}}_v. \label{FBSDE:defautable:ATSM:backward:aa}
\end{align}
We have explicit expression for the solution to FBSDE \eqref{FBSDE:defautable:ATSM:forward:aa}-\eqref{FBSDE:defautable:ATSM:backward:aa}
\begin{align}
Y_t&= -(U_tX_t+p_t),\quad \label{eq:FBSDE:defautable:rep:Yy}	\\
Z_t&=\sigma U_t(\sqrt{\alpha+\beta X_t})+q_{t}, \label{eq:FBSDE:defautable:rep:Zz}	
\end{align}
where $U_s$ satisfies the Riccati equation
\begin{align}
&\dot{U}_t+aU_t+\frac{\beta}{2}\sigma^2 U_t^2- R=0, \quad t\in[0,T] \label{eq:Riccati:defautablee}\\
&U_T=0, \label{eq:Riccati:defaultable:boundaryy}
\end{align}
and $(p, q)$ solves the BSDE
\begin{equation}\label{eq:BSDE_numerical_a}
p_t= -\ln C_T-\int_t^T\left(k-\frac{\alpha}{2}\sigma^2 U_v^2-bU_v-\frac{1}{2}q_v^2\right) dv-\int_t^Tq_vdW^{\mathbb{P}}_v.  
\end{equation}
The defaultable bond price can be expressed as
\begin{equation}
D(t,T) = \exp\{-Y_t\}.
\end{equation}
The aggregate relative entropy of the optimal measure $\mathbb{Q}^\star$ with respect to $\mathbb{P}$ is given by 
$$H_{t,T}(\mathbb{Q}^\star|\mathbb{P})=\mathbb{E}^{\mathbb{Q}^\star}[\int_t^T\frac{1}{2}Z_v^2 dv|\mathcal{F}_t].$$

We introduce the following proposition which gives explicit solution to a special type of quadratic BSDEs.
\begin{proposition}\label{prop:quadratic_BSDE_explicit_solution}
	On a probability space $(\Omega, \mathcal{F},\{\mathcal{F}_t,t\geq0\},\mathbb{P}$, consider the following BSDE
	$$y_t=\xi-\int_t^T(\frac{1}{2}z_s^\prime z_s+g_s)ds-\int_t^Tz_sdW^\mathbb{P}_s,$$
	where $(y_t,z_t)\in \mathbf{R}\times \mathbf{R}^n$, $\xi$ is real-valued $\mathcal{F}_T$-measurable random variable, $g_t$ is real-valued $\mathcal{F}_t$-adapted process satisfying
	$\mathbb{E}_{\mathbb{P}}[\sup_{0\leq t\leq T}|g_t|^2]<\infty$.
	Then $y_t$ can be expressed explicitly as
	$$y_t=-\ln\{\mathbb{E}_{\mathbb{P}}[e^{-\xi}|\mathcal{F}_t]\}-\int_0^tg_sds.$$
\end{proposition}
\begin{proof}
	Make the exponential transformation $\tilde y_t=e^{-y_t}$, by It\^{o}'s formula  $\tilde {y}_t$ satisfies
	
	$$\tilde {y}_t=e^{-\xi}+\int_t^Tg_s\tilde {y}_sds+\int_t^T\tilde {y}_sz_sdW^\mathbb{P}_s.$$
	Define the adjoint process $$x_s=e^{\int_t^sg_udu}, \quad s\geq t.$$
	Notice that $x_t=1$, and apply  It\^{o} formula to $x_s\cdot \tilde {y}_s$ from $t$ to $T$, to find
	\begin{align}
	\notag \tilde {y}_t & =x_Te^{-\xi}+\int_t^T\tilde {y}_sx_sz_sdW^\mathbb{P}_s \\
	& =e^{\int_t^Tg_sds-\xi}+\int_t^T\tilde {y}_se^{\int_t^sg_udu}z_sdW^\mathbb{P}_s.\label{2.6}
	\end{align}
	Take conditional expectation on $\mathcal{F}_t$ of both sides of \eqref{2.6}, we obtain
	\begin{align*}
	\tilde y_t & =\mathbb{E}_{\mathbb{P}^T}[e^{\int_t^Tg_sds-\xi}|\mathcal{F}_t] \\
	& =e^{\int_t^T g_sds}\mathbb{E}_{\mathbb{P}}[e^{-\xi}|\mathcal{F}_t].
	\end{align*}
	Finally we have
	\begin{align*}
	y_t & =-\ln{\tilde y_t}\\
	& =-\ln\{\mathbb{E}_{\mathbb{P}}[e^{-\xi}|\mathcal{F}_t]\}-\int_0^tg_sds.
	\end{align*}
	
\end{proof}

\begin{remark}
	The existence and uniqueness of the solution to general quadratic BSDEs was proven by  \citet{kobylanski2000backward}.
	Proposition \eqref{prop:quadratic_BSDE_explicit_solution} is only a special case in which we can give the explicit solution.
\end{remark}

Applying Proposition \ref{prop:quadratic_BSDE_explicit_solution} to the BSDE \eqref{eq:BSDE_numerical_a},  we may express $p_t$ explicitly as
\begin{equation}
p_t=-\ln\{\mathbb{E}_{\mathbb{P}}[\frac{1}{C_T}|\mathcal{F}_t]\}-\int_0^t\left(k-\frac{1}{2}U_vk_0U_v^\prime-U_vB\right)dv.
\end{equation}
However, we do not have an explicit expression for the process $q_t$.  Alternatively we can solve BSDE
\eqref{eq:BSDE_numerical_a}  numerically.  We can transform the quadratic BSDE \eqref{eq:BSDE_numerical_a}  into an
equivalent linear BSDE by defining
$$	\tilde p_t=e^{-p_t}, \quad	\tilde q_t=\tilde p_t\cdot q_t,$$
so that the BSDE \eqref{eq:BSDE_numerical_a} is equivalent to
\begin{equation}\label{eq_BSDE_linear}
\tilde p_t=\frac{1}{C_T}+\int_t^T(k-\frac{\alpha}{2}\sigma^2 U_s^2-bU_s)\tilde p_sds+\int_t^T\tilde
q_sdW_s.
\end{equation}

We approximate the solution to the BSDE (\ref{eq_BSDE_linear}) by considering the following discretized BSDE
\begin{align*}
\tilde p_{t_{m+1}} &= \tilde p_{t_{m}} -(k-\frac{\alpha}{2}\sigma^2 U_{t_m}^2-bU_{t_m})\tilde p_{t_m}\Delta t - \tilde q_{t_{m}}\Delta W_{t_m}^{\mathbb{P}}, \quad t_0 \leq t_m \leq t_M, \\
\tilde p_{t_M} &= \frac{1}{C_T}.
\end{align*}
The discretized BSDE can be solved using the following recursive scheme (see \cite{gobet2005regression})
\begin{align*}
\tilde q_{t_m} & = \frac{1}{\Delta t} \mathbb{E}[p_{t_{m+1}}\Delta W^{\mathbb{P}}_{t_m}| \mathcal{F}_{t_m}],\\
\tilde{p}_{t_m} &= \frac{\mathbb{E}[\tilde{p}_{t_{m+1}}|\mathcal{F}_{t_m}]}{1-(k-\frac{\alpha}{2}\sigma^2 U_{t_m}^2-bU_{t_m})\Delta t}.
\end{align*}
We estimate the conditional expectation by the Monte-Carlo regression approach proposed by \cite{gobet2005regression}. 
With a time discretization over $[0,T]$ we use the Euler scheme  to generate the paths of the forward process $X_t$ in \eqref{FBSDE:defautable:ATSM:forward:aa}, approximated by $X_{t_m}$. We denote by $U_{t_m}$ the numerical solution to the Riccati equation \eqref{eq:Riccati:defautablee}.  Then the defaultable bond price is estimated as $$D(t_{t_m},T) \approx \exp(U_{t_m}X_{t_m}+p_{t_m}).$$
The aggregate relative entropy of the optimal measure $\mathbb{Q}^\star$ with respect to $\mathbb{P}$ is estimated as
$$H_{t_{m},T}(\mathbb{Q}^\star|\mathbb{P})=\mathbb{E}^{\mathbb{Q}^\star}[\sum_{t\leq t_m \leq T}\frac{1}{2}\left(\sigma U_{t_m}\left(\sqrt{\alpha+\beta X_{t_m}}+q_{t_m}\right)\right)^2 \Delta t |\mathcal{F}_t].$$

Consider the parameters $a = -1\times 10^{-2}, b = 1 \times 10^{-5}, \sigma = 7.4\times 10^{-3}, R = 1, k =0, T = 1, V_0 = 20, \sigma_V = 0.2, \kappa = 0.8$ and $\xi$ (recovery rate) is a uniform random variable on $[0.4, 0.6]$.   Figure \ref{fig:IRProcess} shows one sample path of the realized interest rate process. Figure \ref{fig:DefautableBondPriceWithDefault} presents the case where default occurs before the maturity $T$ as the value process crosses the default barrier. Figure \ref{fig:DefautableBondPriceWithDefault} also shows the evolution of the defaultable bond price. The defaultable bond price fluctuates more before the default time, which is affected not only by the distance between the value process and the default barrier but also the time to maturity.  The defaultable bond price after default time is almost constant which is determined by the recovery rate.  Lastly, Figure \ref{fig:DefautableBondPriceWithDefault} illustrates the 
the aggregate relative entropy process $H(t,T)$.  Similar to the price process, the aggregate relative entropy process fluctuates more before default due to uncertainty of default timing. After default, the aggregate relative entropy decreases to zero almost linearly since the major uncertainty after default comes from the interest rate process which is negligible compared with default risk. Figure \ref{fig:DefautableBondPriceWithoutDefault} illustrates the case where default does not occur before maturity. The default bond price fluctuates strongly in the early period of horizon $[0,T]$ and then converges to 1 as time approaches maturity without occurrence of default.

\begin{figure}[!]
	\centering
	\includegraphics[width=0.7\linewidth]{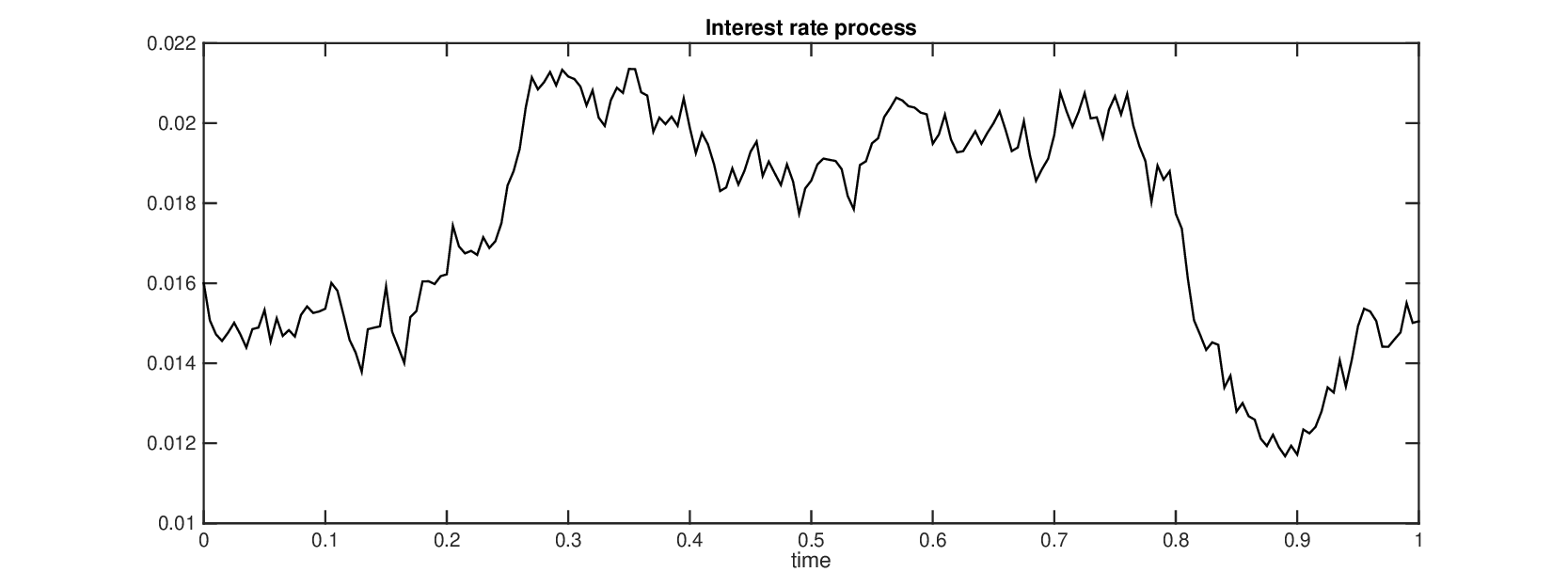}
	\caption{Interest rate process}
	\label{fig:IRProcess}
\end{figure}
\begin{figure}[!]
	\centering
	\includegraphics[width=0.7\linewidth]{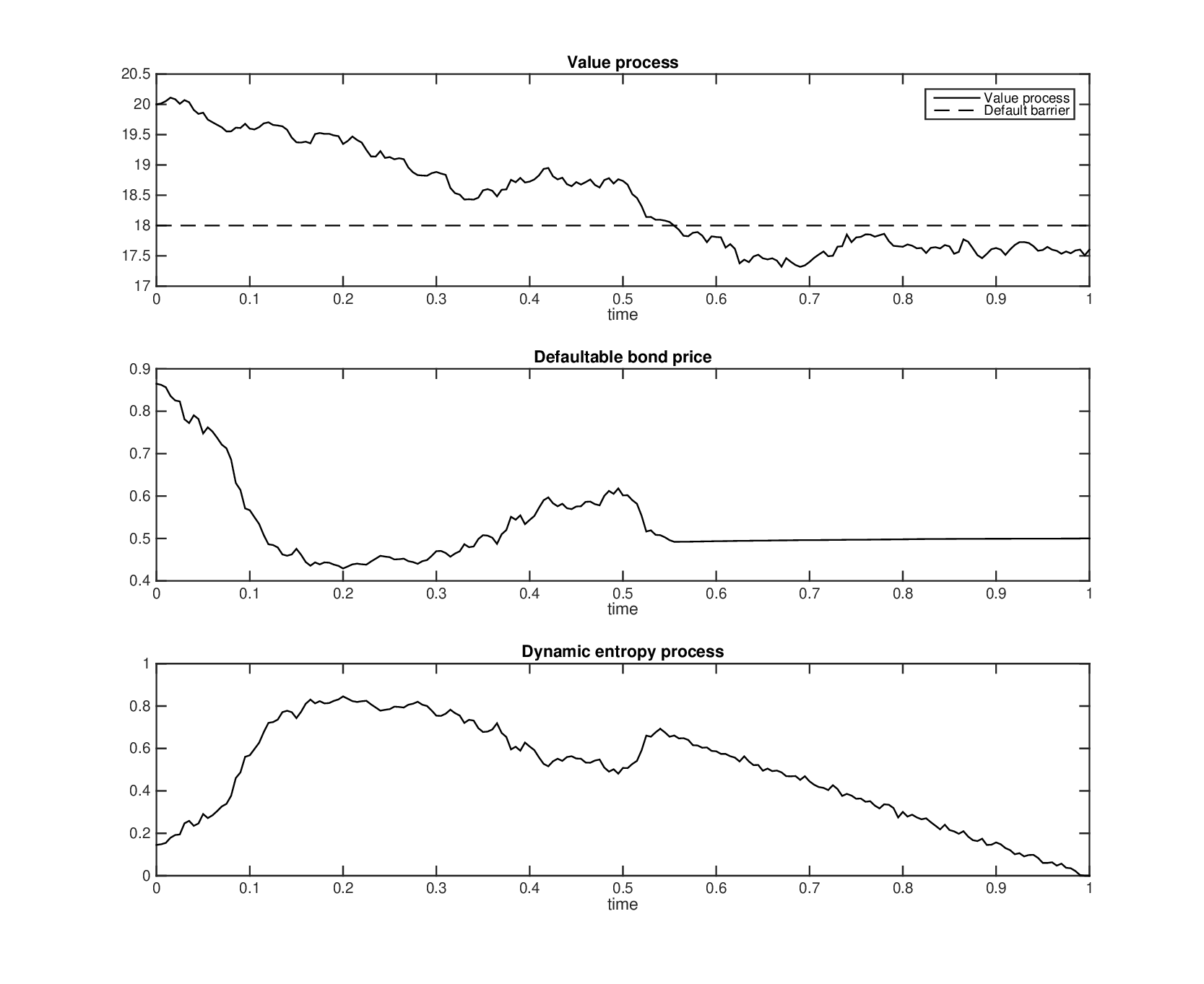}
	\caption{Realization with default}
	\label{fig:DefautableBondPriceWithDefault}
\end{figure}
\begin{figure}[!]
	\centering
	\includegraphics[width=0.7\linewidth]{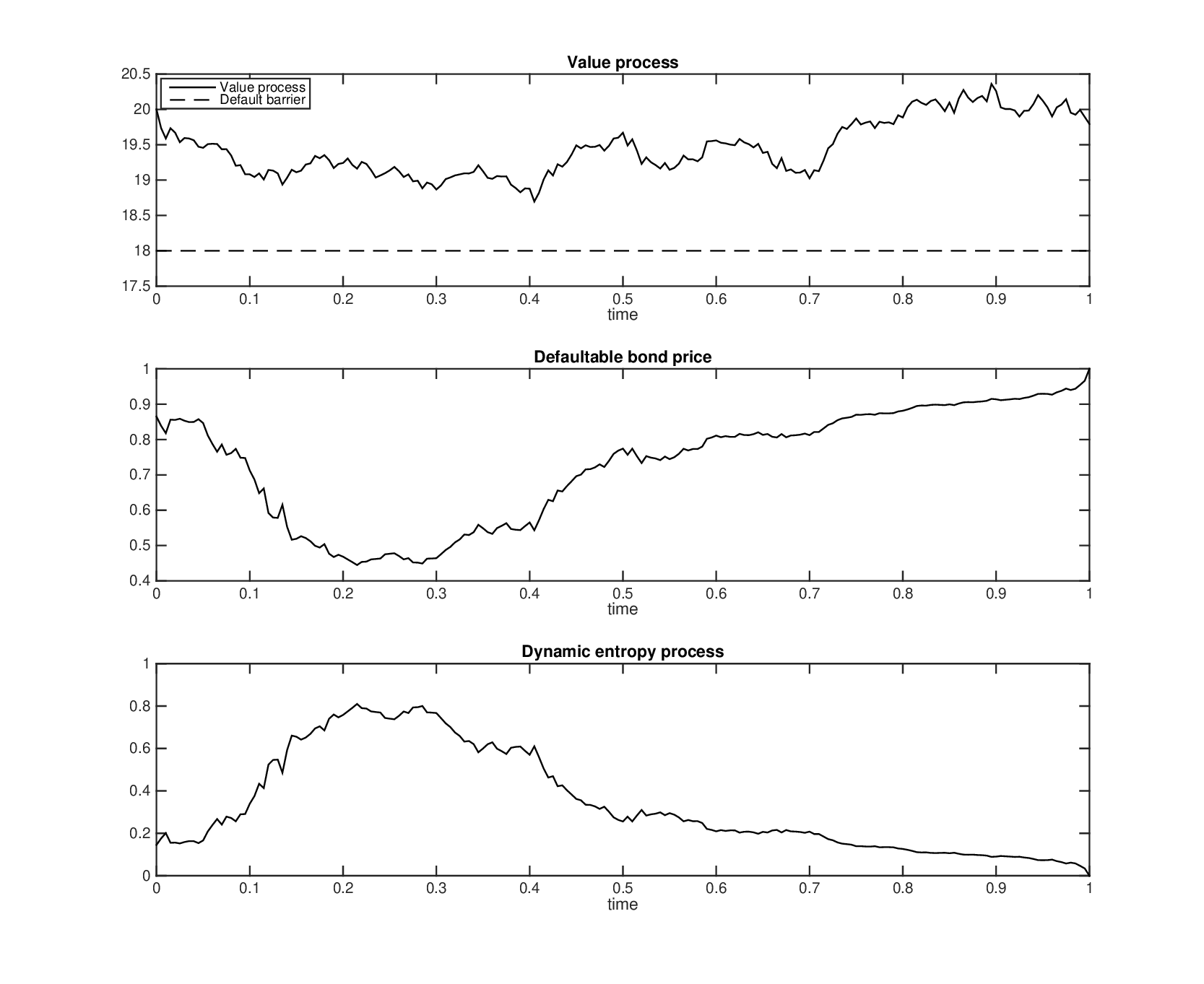}
	\caption{Realization without default}
	\label{fig:DefautableBondPriceWithoutDefault}
\end{figure}

\section{Conclusions}\label{conclusion}
In this paper we introduced and solved the entropic measure transform problem in Proposition~\ref{proposition:duality:relation:a}.  The general characterization of the density process of the optimal measure $\mathbb{P}^{\star}$, was characterized by a semimartignale BSDE in Theorem~\ref{thrm_main}.  Proposition~\ref{proposition:duality:relation:a} was used to interpret the conditional entropic risk of a random variable as a penalized conditional expectation, under the optimal measure $\mathbb{P}^{\star}$.  Proposition~\ref{thrm_affine_characterization} used the EMT to obtain a convenient formula for computing the conditional expectation of a process which can be expressed as an affine process under a related measure.  
 
Theorem~\ref{thrm_main} and Proposition~\ref{proposition:duality:relation:a} were then used to characterized the pricing problem for default-free bonds from a new perspective by formulating an entropic measure transform problem.  The solution of these problems consists of the entropic measure transform and the value process and these are characterized by the solution of a decoupled nonlinear FBSDE.  The explicit solutions to FBSDEs under ATSMs and QTSMs can be found in \citet{hyndman2009forward} and \citet{Zhou2010}. We provide an equivalence relationship between the optimal control approach in \citet{gombani2012arbitrage} and the entropic measure transform approach. Since the EMT approach, introduced in this paper, can easily incorporate jumps, it is more flexible then the OSC approach.  We also extend the EMT problem to include jumps. We give explicit solutions to the related FBSDEs with jumps, which generalizes \citet{hyndman2009forward} and \citet{Zhou2010}. Finally we formulate the EMT problem for defaultable bonds, in which case the related FBSDE generally does not have a completely explicit solution due to the dependence on the general specification of the default time and recovery amount of the random terminal value of the BSDE. However, the partially explicit solution still simplifies the problem of solving the nonlinear FBSDE.   Future research can consider specific models for the default time and recovery scheme where it may be possible to give more explicit solutions.  It is also of interest to consider the EMT problem in the context of pricing problems for other derivative securities.

\appendix

\section{Appendix}
This appendix discusses technical results on Riccati equations.  
\subsection{Riccati equations}
\begin{proposition}
	The following decoupled Riccati equations admit a pair of unique explicit solutions. 
	\begin{align}
	&\dot{q}_s+q_sA+A^\prime q_s+\frac{1}{2}(q_s^\prime+q_s)\Sigma\Sigma^\prime(q_s^\prime+q_s)-Q=0_{n\times
	n}, \quad s\in[0,T]\label{eq:Riccati:QTSM:q}\\
	&\dot{u}_s+u_sA+B^\prime(q_s^\prime+q_s)+u_s\Sigma\Sigma^\prime(q_s^\prime+q_s)-R^\prime=0_{1\times n},
	\quad s\in[0,T]\label{eq:Riccati:QTSM:u}\\
	&q_T=0_{n\times n},\quad u_T=0_{1\times n}. \label{eq:Riccati:QTSM:boundary}
	\end{align}
\end{proposition}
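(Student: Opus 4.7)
The plan is to exploit the close link, already advertised in the remark preceding the proposition, between this Riccati system and the one arising in the LQG control formulation of \citet{gombani2012arbitrage} that appears as \eqref{ODEs:OSC}. I would first argue that the equation for $q_s$ admits a symmetric solution, i.e. I would look for $q_s=q_s^\prime$. Both the boundary datum $q_T=0$ and every term in the equation are compatible with this ansatz: $q_sA+A^\prime q_s$ is manifestly symmetric if $q_s$ is, the quadratic term $\tfrac12(q_s^\prime+q_s)\Sigma\Sigma^\prime(q_s^\prime+q_s)$ is always symmetric, and $Q$ is symmetric by assumption. Under the symmetric ansatz the equation reduces to
\begin{equation*}
\dot q_s+q_sA+A^\prime q_s+2q_s\Sigma\Sigma^\prime q_s-Q=0,\qquad q_T=0.
\end{equation*}

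Next I would perform the change of variable $\hat q_s:=-q_s$. A direct substitution converts the display above into
\begin{equation*}
\dot{\hat q}_s+A^\prime\hat q_s+\hat q_sA-2\hat q_s\Sigma\Sigma^\prime\hat q_s+Q=0,\qquad \hat q_T=0,
\end{equation*}
which is exactly the Riccati equation for $q_s$ appearing in the OSC system \eqref{ODEs:OSC} for the QTSM. Existence, uniqueness and boundedness of a symmetric positive semidefinite solution $\hat q_s$ on $[0,T]$ then follow from \citet[Proposition 3.4]{gombani2012arbitrage} together with classical LQG results (Radon's lemma linearizes the equation into a $2n\times 2n$ linear Hamiltonian ODE, from which an explicit closed-form representation for $q_s$ is also obtained, justifying the word "explicit" in the proposition). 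Undoing the sign change delivers the desired $q_s$. Uniqueness in the full (not necessarily symmetric) class follows because any solution satisfies the same equation as its transpose, so the antisymmetric part $\tfrac12(q_s-q_s^\prime)$ solves a homogeneous linear ODE with zero terminal datum and is therefore identically zero.

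With $q_s$ in hand, the equation for $u_s$ is a linear first-order ODE,
\begin{equation*}
\dot u_s+u_s\bigl(A+\Sigma\Sigma^\prime(q_s^\prime+q_s)\bigr)+B^\prime(q_s^\prime+q_s)-R^\prime=0,\qquad u_T=0,
\end{equation*}
with continuous and bounded coefficients on $[0,T]$ (thanks to the boundedness of $q_s$ just proved). Existence and uniqueness of a bounded $u_s$ follow from the standard theory of linear ODEs, and the variation-of-constants formula provides the explicit representation. The main obstacle in the whole argument is the nonlinear equation for $q_s$; everything else is routine once that is overcome, and the obstacle itself is handled by reducing the problem to one already solved in \citet{gombani2012arbitrage}.
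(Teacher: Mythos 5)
Your proof is correct and follows essentially the same route as the paper's: the key idea in both is to split $q_s$ into symmetric and antisymmetric parts (you phrase this as a symmetry ansatz plus the observation that the antisymmetric part of any solution satisfies a homogeneous linear ODE with zero terminal datum), reduce the symmetric part to the standard LQG Riccati equation of \citet{gombani2012arbitrage} (you via the sign change $\hat q_s=-q_s$ matching \eqref{ODEs:OSC}, the paper by citing their Theorem~B.1 directly), and then treat the $u_s$ equation as a linear ODE. As a minor point in your favour, your reduced symmetric equation correctly carries the coefficient $2$ on the quadratic term, whereas the paper's intermediate display \eqref{eq:Riccati:QTSM:proof:d} appears to drop a factor of $2$ there.
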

\begin{proof}
	We first prove equation \eqref{eq:Riccati:QTSM:q} admits a unique explicit solution. By taking the transpose of both sides of equation \eqref{eq:Riccati:QTSM:q} we find
	\begin{equation}\label{eq:Riccati:QTSM:proof:a}
	\dot{q}^\prime_s+A^\prime q_s^\prime+q_s^\prime A +\frac{1}{2}(q_s^\prime+q_s)\Sigma\Sigma^\prime(q_s^\prime+q_s)-Q=0_{n\times n}, 
	\end{equation}
	Adding equation \eqref{eq:Riccati:QTSM:proof:a} to equation \eqref{eq:Riccati:QTSM:q} gives 
	\begin{equation}\label{eq:Riccati:QTSM:proof:b}
	(\dot{q}_s+\dot{q}^\prime_s)+A^\prime (q_s+q_s^\prime)+(q_s+q_s^\prime )A +(q_s^\prime+q_s)\Sigma\Sigma^\prime(q_s^\prime+q_s)-2Q=0_{n\times n}, 
	\end{equation}
	and subtracting equation \eqref{eq:Riccati:QTSM:proof:a} from equation \eqref{eq:Riccati:QTSM:q} to find
	\begin{equation}\label{eq:Riccati:QTSM:proof:c}
	(\dot{q}_s-\dot{q}^\prime_s)+A^\prime (q_s-q_s^\prime)+(q_s-q_s^\prime )A=0.
	\end{equation}
	Define 
	$$U_s=\frac{q_s^\prime+q_s}{2},\quad V_s=\frac{q_s-q_s^\prime}{2},$$
	and by the terminal condition \eqref{eq:Riccati:QTSM:boundary} we have
	$$U_T=0_{n\times n},\quad V_T=0.$$
	Hence $U_s$ and $V_s$ satisfy the following equations
	\begin{align}
	&\dot{U}_s+A^\prime U_s+U_sA +U_s\Sigma\Sigma^\prime U_s-Q=0_{n\times n}, \label{eq:Riccati:QTSM:proof:d}\\
	&\dot{V}_s+A^\prime V_s+V_sA=0_{n\times n},\\
	& U_T=0_{n\times n},\quad V_T=0_{n\times n}. \label{eq:Riccati:QTSM:proof:e}
	\end{align}
	By \citet[Theorem B.1]{gombani2012arbitrage} there exists a pair of unique $(U_s,V_s)$ satisfying
	equations \eqref{eq:Riccati:QTSM:proof:d}-\eqref{eq:Riccati:QTSM:proof:e}. Moreover, we actually have $V_s=0_{n\times n}$ which means
	$q_s=q^\prime_s$, so $q_s$ is symmetric, and $q_s=U_s$.
	
	After we obtain the solution $q_s$, equation \eqref{eq:Riccati:QTSM:u} is simplified as an ODE for $u_s$, which can be solved explicitly as in \citet[Corollary B.3]{gombani2012arbitrage}.
	
\end{proof}

\bibliographystyle{abbrvnat}
\bibliography{mybib}
\end{document}